\numberwithin{equation}{section} 
\newcommand{\keywords}[1]{\par\noindent\textbf{Keywords:} #1}
\newtheorem{thm}{Theorem}[section]
\newtheorem{proposition}[thm]{Proposition}
\newtheorem{definition}[thm]{Definition}
\newtheorem{theorem}{Theorem}[section]
\newtheorem{lemma}[theorem]{Lemma}
\newtheorem{example}{Example}[section]
\renewcommand{\and}{\quad \quad}
	\title{Approximating Quantum States with Positive Partial Transposes in Multipartite System via Linearized Proximal Alternative Direction Method of Multipliers}
	\author{Jingwen Fan\textsuperscript{a}	\and 
	Deren Han\textsuperscript{a} \and Lin Chen\textsuperscript{a}\thanks{Corresponding Author: linchen@buaa.edu.cn}}	
	\affil{\textsuperscript{a} LMIB, Ministry of education, and School of Mathematical Sciences, Beihang University, Beijing 100191, China}
	\date{}
\begin{document}
	\maketitle
	\begin{abstract}
		 Numerical approximation of quantum states via convex combinations of states with positive partial transposes (bi-PPT state) in multipartite systems constitutes a fundamental challenge in quantum information science. We reformulate this problem as a linearly constrained optimization problem. An approximate model is constructed through an auxiliary variable and a suitable penalty parameter, balancing constraint violation and approximation error. To slove the approximate model, we design a Linearized Proximal Alternating Direction Method of Multipliers (LPADMM), proving its convergence under a prescribed inequality condition on regularization parameters. The algorithm achieves an iteration complexity of $O(1/\epsilon^2)$ for attaining $\epsilon$-stationary solutions. Numerical validation on diverse quantum systems, including three-qubit W/GHZ states and five-partite GHZ and multiGHZ states with noises, confirms high-quality bi-PPT approximations and decomposability certification, demonstrating the utility of our method for quantum information applications.
	\end{abstract}

	\keywords{entanglement, bi-PPT quantum state, Linearized Proximal Alternating Direction Method of Multipliers, iteration complexity}
	
\section{Introduction}
%量子纠缠的基础理论与应用价值量子纠缠作为量子计算核心资源（引用Einstein-Podolsky-Rosen原始论文）多体纠缠在量子通信/计算中的关键作用（引用Horodecki综述）PPT判据的理论地位（Peres-Horodecki准则原始文献）
Quantum entanglement, first articulated by Einstein, Podolsky, and Rosen \cite{einstein1935can}, has emerged as a cornerstone of modern quantum technologies, driving breakthroughs in computing \cite{shor1994algorithms}, secure communication \cite{ekert1991quantum}, and simulation \cite{georgescu2014quantum}. A central challenge lies in characterizing quantum states, particularly those expressible as convex combinations of bi-PPT states. The study of bi-PPT decompositions originates from the Peres-Horodecki criterion \cite{peres1996separability,horodecki1996necessary}, which established positivity under partial transposition as a necessary condition for separability. While being both necessary and sufficient for bipartite systems of dimensions $2\times2$ and $2\times3$, the situation becomes significantly more complex in multipartite scenarios. Previous theorectical work contains the results for low-demensional system and the general multipartite system. Biseparability was analyzed in a five-dimensional set of states within three-subsystem quantum systems of arbitrary finite dimension \cite{eggeling2001separability}, and some specific examples for the biseparable states in several systems were also constructed \cite{han2019construction}. Recent research has further elucidated the properties of bi-PPT states in general settings. Some analytic examples were constructed to disprove that biseparability is equivalent to being a PPT mixture for general cases \cite{ha2016construction}, and the conjecture that the completely symmetric states are separable if and only if it is a convex combination of symmetric pure product states was proven to be true for both bipartite and multipartite cases \cite{chen2019separability}. Recent work highlighted fundamental differences between bipartite and multipartite biseparability \cite{choudhary2025unconditionally}. Despite substantial theoretical progress for bi-separbility, determining if a state can be decomposed as a convex combination of bi-PPT states is a more complex problem. Numerical methods for determining bi-PPT decomposability in $n$-partite systems remain underexplored. The exponential growth of Hilbert space dimension makes conventional separability criteria \cite{horodecki2009quantum} computationally prohibitive. Moreover, deciding bi-PPT decomposability is strongly NP-hard for $n\ge4$ \cite{gharibian2008strong}. This fundamental limitation necessitates novel optimization paradigms. Prior efforts in entangled state approximation have predominantly focused on bipartite systems, where techniques such as convex optimization \cite{han2013successive}, semidefinite relaxations \cite{Lancien_2015,li2020separability}, and projected gradient flows \cite{lin2022low} have achieved notable success. Recent efforts have also explored machine learning techniques \cite{chen2021entanglement} and neural network representations \cite{asif2023entanglement} for entanglement detection, yet scalable and general methods for bi-PPT approximation are still lacking.

%The theoretical significance of biseparable lies in its role as a critical relaxation of full separability, enabling practical entanglement detection where exact separability tests are computationally infeasible.%These serve as essential benchmarks for separability and entanglement detection owing to the necessary condition imposed by the Peres-Horodecki criterion \cite{peres1996separability}. States that cannot be decomposed in this manner are necessarily entangled, allowing for efficient resource identification without full state tomography. Beyond foundational insights, such decompositions underpin device-independent quantum security protocols, exemplified by Ekert’s quantum key distribution framework, which leverages entanglement to guarantee cryptographic security. In multipartite systems, tests based on bi-PPT decompositions can reveal bound entanglement—entangled states that are not distillable yet play critical roles in quantum error correction \cite{wang2023universal}. Thus, the study of bi-PPT decompositions lies at the heart of both quantum foundations and scalable technologies, bridging theoretical inquiry with practical applications in resource theories and hardware-efficient algorithms.%, with empirical results \cite{ioannou2022simulability} confirming exponential time complexity beyond 8 qubits %A complete family of separability criteria based on semidefinite programming has been proposed in \cite{doherty2004complete}. 

%理论结果

%ADMM的理论突破与量子适配性
To address this gap, we reformulate the problem into a tractable optimization model solvable via an adapted Alternating Direction Method of Multipliers (ADMM), enabling the identification of stationary points. ADMM has proven highly effective in diverse applications including image processing \cite{yang2015alternating}, mobile edge intelligence \cite{he2024admm}, matrix completion and separation \cite{shen2014augmented,sun2014alternating,xu2012alternating,gao2024low} and neuroscience\cite{sun2016deep}. Theoretical advancements leverage the Kurdyka-Łojasiewicz (K\L) inequality for convergence guarantees \cite{attouch2010proximal}, which provides a powerful framework for analyzing the convergence of nonconvex optimization algorithms by ensuring that the objective function and linear mapping satisfy certain properties. This has been complemented by subsequent developments, including globally convergent variants with quadratic regularization \cite{deng2016global,hong2016convergence} that enhance stability and avoid divergent behavior in ill-conditioned problems. Extensions to nonsmooth composite objectives \cite{guo2017convergence} have broadened the scope of ADMM to handle functions involving norms and other non-differentiable terms, which are prevalent in sparse optimization and compressed sensing. Moreover, relaxed continuity conditions \cite{wang2019global} have been introduced to broaden the applicability of ADMM to problems where traditional smoothness assumptions are violated. These advances make ADMM a robust and versatile framework for tackling the challenging problem of bi-PPT decomposition.%Mechanistic reinterpretations \cite{eckstein2015understanding} further enriched its theoretical foundations. The linear and sublinear convergence rate of multi-block ADMM have also been established \cite{lin2015sublinear,davis2017faster}. Recent innovations include combining ADMM with Riemannian optimization to handle manifold constraints efficiently \cite{li2022riemannian}, adaptive penalty parameter schemes under semi-strong convexity assumptions to accelerate convergence \cite{tang2024self}. 

In this work, we develop a comprehensive optimization framework for approximating multipartite quantum states with the from in (\ref{proform}), advancing beyond conventional witness-based semi-definite programming detection methods \cite{jungnitsch2011taming}. Our method employs linear transformations of partially transposed matrices and variable substitution to establish a novel mathematical model that inherently preserves quantum correlations while imposing linear constraints. To overcome the poor properties of indicator functions, we introduce a penalty term to balance objective minimization and constraint violation. The resulting non-separable objective function, characterized by its marginal convexity and smoothness, is addressed via the LPADMM. This algorithm decomposes the problem into tractable subproblems including coefficient optimization under simplex constraints (enforced via MATLAB's `quadprog'), positive semidefiniteness enforcement through spectral projections, and trace normalization via KKT-conditions. We establish theoretical guarantees including monotonicity of the augmented Lagrangian, bounded subgradients under Lipschitz assumptions, and global convergence to stationary points using the K{\L} property. An iteration complexity of $O(1/\epsilon^2)$ for $\epsilon$-stationary solutions is also provided. Numerical validation employs four distinct test cases: W and GHZ states with white noises in a three-qubit system, and GHZ and multiGHZ states with noises in a five-partite system. Although the problem is inherently nonconvex, conducting multiple experimental runs and selecting the solution with the smallest error enables effective approximation via convex combinations of bi-PPT states. Numerical experiments demonstrate that our approach achieves a satisfactory level of approximation.

This paper is organized as follows. Section \ref{Notation} introduces foundational notations and optimization prerequisites. The problem formulation and LPADMM framework are then presented in Section \ref{prodes}. Numerical experiments are provided in Section \ref{numexp}. Finally, Section \ref{con} concludes the paper. The convergence proof and complexity analysis are given in the Appendix.
\section{Notation and Preliminaries}\label{Notation}
%介绍问题：
\subsection{Convex Combination of bi-PPT States}
To formalize the problem, we clarify with the three-partite case $ABC:=\mathcal{H}^{n_1}\otimes\mathcal{H}^{n_2}\otimes\mathcal{H}^{n_3}$, noting that extension to $n$-partite systems follows analogous principles. Within this Hilbert space, mixed states are represented by density matrices $\rho_{n_1n_2n_3\times n_1n_2n_3}$ (positive semi-definite with unit trace), which exhibit critical structural properties. A three-partite state admitting decomposition as a convex combination of bi-PPT states satisfies
\begin{align}\label{proform}
	\rho=a\alpha+b\beta+c\gamma ,\quad a+b+c=1, a, b, c\ge0,
\end{align}
where $\alpha, \beta, \gamma$ satisfy $\alpha\succeq0$, $\alpha^{\Gamma_A}\succeq0$, $\beta\succeq0$, $\beta^{\Gamma_B}\succeq0$, $\gamma\succeq0$, $\gamma^{\Gamma_C}\succeq0$ with unit trace ($\mathrm{tr}\left( \alpha\right) =\mathrm{tr}\left( \beta\right) =\mathrm{tr}\left( \gamma\right) =1$). Here $\alpha^{\Gamma_A}$, $\beta^{\Gamma_B}$ and $\gamma^{\Gamma_C}$ denote the partial transpositions over subsystems $A$, $B$ and $C$ respectively. Partial transpositions are defined via block-wise permutation as follows:
\begin{itemize}
	\item Subsystem $A$: For $\alpha$ partitioned into $n_1\times n_1$ blocks $F_{i,j}\in\mathbb{R}^{n_2n_3\times n_2n_3}$, which means 
	$\alpha=\begin{pmatrix}F_{1,1}&\cdots&F_{1,n_1}\\
		\vdots&\ddots&\vdots\\
		F_{n_1,1}&\cdots&F_{n_1,n_1}
	\end{pmatrix}$, we have $$\alpha^{\Gamma_A}=\begin{pmatrix}F_{1,1}&\cdots&F_{n_1,1}\\
		\vdots&\ddots&\vdots\\
		F_{1,n_1}&\cdots&F_{n_1,n_1}
	\end{pmatrix}.$$
	\item Subsystem $B$: For $\beta$ partitioned into $n_1\times n_1$ blocks $G_{i,j}$, where each $G_{i,j}$ is further partitioned into $n_2\times n_2$ blocks  
	$H_{i,j}^{p,q}\in\mathbb{R}^{n_3\times n_3}$, we have
	$$\beta^{\Gamma_B}=\begin{pmatrix}G_{1,1}&\cdots&G_{1,n_1}\\
		\vdots&\ddots&\vdots\\
		G_{n_1,1}&\cdots&G_{n_1,n_1}
	\end{pmatrix},\quad G_{i,j}=\begin{pmatrix}H^{1,1}_{i,j}&\cdots&H^{n_2,1}_{i,j}\\
		\vdots&\ddots&\vdots\\
		H^{1,n_2}_{i,j}&\cdots&H^{n_2,n_2}_{i,j}
	\end{pmatrix}.$$
	This definition implies that the partial transposition over subsystem $B$ permutes only the innnermost block indices while preserving the outer block stucture.
	
	\item Subsystem $C$: Implemented via element-wise matrix transpose on each $n_3\times n_3$ block ($H_{i,j}^{p,q}$). 
\end{itemize}

For an $n$-partite system, bi-PPT states require positivity under partial transpositions of at least two subsystems. The composite operation is achieved through sequential single-subsystem transpositions. For example, we consider a four-partite system $ABCD:=\mathcal{H}^{n_1}\otimes\mathcal{H}^{n_2}\otimes\mathcal{H}^{n_3}\otimes\mathcal{H}^{n_4}$ partitioned as $AB\otimes CD$. The bi-PPT is enforced by
$$\rho^{\Gamma_{AB}}=\left( \rho^{\Gamma_{A}}\right) ^{\Gamma_{B}},$$
where $\Gamma_{A}$ and $\Gamma_{B}$ act respectively on the first and second subsystems of the $AB$ partition.

\subsection{Optimization Preliminaries}
In this section, we introduce some basic notations, optimization definitions and lemmas for this paper. The field of real numbers is denoted by $\mathbb{R}$ and the identity matrix with size $m\times m$ is denoted by $I_{m}$. Throughout this paper, $\left( \cdot\right) ^\top$ means the transpose of a vector or matrix. $\mathrm{mat}\left( \cdot\right) $ transforms a vector into a matrix and $\mathrm{vec}\left( \cdot\right) $ transforms a matrix into a vector. The notation $\left\|\cdot\right\|$ denotes the Euclidean norm for vectors and the Frobenius norm for matrices. Let $M$ be a positive definite matrix; the $M$-norm of a vector $x$ is defined as $\left\|x\right\|_M:=\sqrt{\left\langle x,Mx\right\rangle }$. For symmetric matrices $M, N\in\mathbb{R}^{n\times n}$, $M\succeq N$ means that $M-N$ is positive semi-definite. The following are some preliminaries on optimization theory.
\begin{definition}
	A function $f$ defined on set $Q$ is said to be L-smooth if it is differentiable and there exists a constant $L>0$ such that for all $x,y\in Q$, the following inequality holds:
	\begin{align*}
		\left\|\nabla f\left( x\right) -\nabla f\left( y\right) \right\|\leqslant L\left\|x-y\right\|.
	\end{align*}
\end{definition}
\begin{lemma}\label{convex_twice}
\cite{nesterov2018lectures} A twice continuously defferentiable function $f$ defined on an open convex set $Q$ is convex if and only if for any $x\in Q$, we have
\begin{align*}
	\nabla^2f\left( x\right) \succeq0.
\end{align*}
\end{lemma}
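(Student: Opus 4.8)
The plan is to establish the equivalence by reducing the multivariate statement to the one-dimensional case via restriction of $f$ to line segments inside $Q$. For fixed $x, y \in Q$, I would introduce the auxiliary scalar function $\phi(t) := f(x + t(y-x))$, which is well defined and twice continuously differentiable on an open interval containing $[0,1]$ because $Q$ is open and convex. By the chain rule, $\phi'(t) = \langle \nabla f(x+t(y-x)), y-x\rangle$ and $\phi''(t) = \langle y-x,\, \nabla^2 f(x+t(y-x))(y-x)\rangle$. Both directions then reduce to the classical scalar fact that a $C^2$ function of one real variable is convex if and only if its second derivative is nonnegative, which I would state and dispatch first (the forward part via the mean value theorem applied to $\phi'$, the reverse part via Taylor expansion).

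For the forward direction (convexity $\Rightarrow$ Hessian positive semi-definite), I would fix $x \in Q$ and an arbitrary direction $d$. Since $Q$ is open, $\psi(t) := f(x+td)$ is defined on a small open interval around $0$ and is convex there, being a convex function composed with an affine map. Invoking the scalar fact, $\psi''(0)\geq 0$; but $\psi''(0) = \langle d,\, \nabla^2 f(x) d\rangle$, so letting $d$ range over all directions yields $\nabla^2 f(x)\succeq 0$. The openness of $Q$ is used here to ensure we may perturb $x$ in both $\pm d$ directions and obtain a genuine two-sided second derivative.

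For the reverse direction (Hessian positive semi-definite $\Rightarrow$ convexity), I would apply Taylor's theorem with integral remainder to $\phi$, giving
\[
	f(y) = f(x) + \langle \nabla f(x), y-x\rangle + \int_0^1 (1-t)\,\phi''(t)\,dt .
\]
Convexity of $Q$ guarantees that the entire segment $x + t(y-x)$ lies in $Q$, so by hypothesis each $\phi''(t) = \langle y-x,\, \nabla^2 f(x+t(y-x))(y-x)\rangle \geq 0$, making the integral nonnegative. This produces the first-order inequality $f(y) \geq f(x) + \langle \nabla f(x), y-x\rangle$ for all $x, y \in Q$, which is equivalent to the convexity of $f$.

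Since this is a standard result, there is no deep obstacle; the only point requiring care is the scalar lemma underlying the forward direction, where one must argue rigorously that a twice-differentiable convex function of one variable has nonnegative second derivative. I would handle this by writing the second-order difference quotient of $\psi$ and passing to the limit, using the convexity inequality at the midpoint, after which both directions follow from the chain-rule computations above.
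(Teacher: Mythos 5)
Your proof is correct. The paper does not prove this lemma at all --- it is stated as a cited preliminary from \cite{nesterov2018lectures} --- and your argument (reduction to the scalar case along line segments, the second-order difference quotient for the forward direction, and Taylor's theorem with integral remainder plus the first-order gradient inequality for the reverse direction) is precisely the standard textbook proof found in that reference, so there is nothing to reconcile.
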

\begin{lemma}
	\cite{nesterov2018lectures} If $f$ is L-smooth and twice continuously differentiable on set Q, then the following inequalities are equivalent.
	\begin{align*}
		&f\left( y\right) -f\left( x\right) -\left\langle \nabla f\left( x\right) ,y-x\right\rangle\leqslant\frac{L}{2}\left\|x-y\right\|^2,\\
		&\nabla^2f\left( x\right) \preceq LI.
	\end{align*}
\end{lemma}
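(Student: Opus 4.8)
The plan is to establish the equivalence by proving each implication separately, using the exact second-order Taylor expansion with integral remainder as the central tool, together with the continuity of $\nabla^2 f$ guaranteed by the twice-continuous-differentiability hypothesis. Note that the $L$-smoothness assumption merely fixes the constant $L$; the equivalence itself is a direct consequence of the two displayed conditions, so I would prove $\nabla^2 f(x)\preceq LI \Longleftrightarrow f(y)-f(x)-\langle\nabla f(x),y-x\rangle\le\frac{L}{2}\|x-y\|^2$.

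First I would show that the Hessian bound implies the quadratic upper bound. Fixing $x,y\in Q$ and setting $g(t)=f\bigl(x+t(y-x)\bigr)$, the integral form of Taylor's theorem for $g$ gives
\begin{align*}
f(y)=f(x)+\langle\nabla f(x),y-x\rangle+\int_0^1(1-t)\,(y-x)^\top\nabla^2 f\bigl(x+t(y-x)\bigr)(y-x)\,dt.
\end{align*}
Since $\nabla^2 f(z)\preceq LI$ holds at every point $z=x+t(y-x)$ on the segment, the quadratic form obeys $(y-x)^\top\nabla^2 f(z)(y-x)\le L\|y-x\|^2$. Substituting this bound and using $\int_0^1(1-t)\,dt=\tfrac12$ yields exactly $f(y)-f(x)-\langle\nabla f(x),y-x\rangle\le\frac{L}{2}\|x-y\|^2$.

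For the converse, I would fix $x\in Q$ and an arbitrary direction $d$, then apply the upper bound along the line $y=x+td$ for small $t>0$, obtaining
\begin{align*}
f(x+td)-f(x)-t\langle\nabla f(x),d\rangle\le\frac{L}{2}t^2\|d\|^2.
\end{align*}
Expanding the left-hand side by Taylor's theorem as $\frac{t^2}{2}d^\top\nabla^2 f(x)d+o(t^2)$, dividing by $t^2/2$, and letting $t\downarrow0$, I obtain $d^\top\nabla^2 f(x)d\le L\|d\|^2$ for every $d$, which is precisely $\nabla^2 f(x)\preceq LI$.

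The argument is essentially routine, so the only points requiring care are the quantifiers and the limiting step: the upper bound must be assumed to hold for all pairs $x,y$ (equivalently, for every base point $x$ and every direction $d$), and the passage $t\downarrow0$ relies on the continuity of $\nabla^2 f$ to control the remainder $o(t^2)$ so that the inequality survives in the limit. Establishing this controlled remainder is the main—albeit mild—obstacle; everything else follows directly from the integral form of Taylor's theorem.
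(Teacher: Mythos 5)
Your proof is correct and is essentially the standard argument from the cited reference (Nesterov): the paper itself states this lemma as a citation without proof, and the textbook proof is exactly your two-directional Taylor argument, i.e.\ the integral-remainder expansion for $\nabla^2 f \preceq LI \Rightarrow$ quadratic upper bound, and the limit $t\downarrow 0$ along $y=x+td$ for the converse. The only implicit hypothesis worth noting is that $Q$ should be open and convex (as in the paper's preceding lemma) so that the segment $[x,y]$ lies in $Q$ and the perturbation $x+td$ is admissible.
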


\begin{lemma}\cite{nocedal1999numerical} [KKT conditions]Consider the minimization problem
\begin{align*}
\min_{x\in\mathbb{R}^{n}}\quad&f\left( x\right) \\
\mathrm{s.t.}\quad& 
\begin{cases}
	c_i\left( x\right) =0, \quad i\in\mathcal{E},\\
	c_i\left( x\right) \ge0,\quad i\in\mathcal{I},
\end{cases}
\end{align*} 
where $f$ and the functions $c_i$ are all smooth, real-valued functions on a subset of $\mathbb{R}^n$, and $\mathcal{E}$ and $\mathcal{I}$ are two finite sets of indices. $c_i, i\in\mathcal{E}$ are the equality constraints and $c_i, i\in\mathcal{I}$ are the inequality constraints.

Suppose that $x^*$ is a local solution of the problem, that the functions are continuously differentiable, and that the set of active constraint gradients is linearly independent (LICQ holds) at $x^*$. Then there is a Larange multiplier vector $\lambda^*$, with components $\lambda_i^*, i\in\mathcal{E}\cup\mathcal{I}$, such that the following conditions are satisfied at $\left( x^*,\lambda^*\right) $
\begin{align*}
	&\nabla_x\left( f\left( x\right) -\sum_{i\in\mathcal{E}\cup\mathcal{I}}\lambda_ic_i\left( x\right) \right) \left( x^*,\lambda^*\right) =0,\\
	&c_i\left( x^*\right) =0,\quad\forall i\in\mathcal{E},\\
	&c_i\left( x^*\right) \ge0,\quad\forall i\in\mathcal{I},\\
	&\lambda_i^*\ge0,\quad\forall i\in\mathcal{I},\\
	&\lambda_i^*c_i\left( x^*\right) =0,\quad\forall i\in\mathcal{E}\cup\mathcal{I}.
\end{align*}
\end{lemma}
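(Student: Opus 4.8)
The plan is to establish these first-order necessary conditions by the classical route: characterize the directions along which $f$ cannot decrease at $x^*$, translate the ``no feasible descent direction'' property of a local minimizer into a statement about a finite system of linear inequalities, and then extract the multipliers via a theorem of the alternative. Throughout I would write $\mathcal{A}(x^*) := \mathcal{E} \cup \{i \in \mathcal{I} : c_i(x^*) = 0\}$ for the active set, and I would record that the LICQ hypothesis is precisely the statement that the family $\{\nabla c_i(x^*) : i \in \mathcal{A}(x^*)\}$ is linearly independent.

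First I would introduce the linearized feasible cone $F(x^*) := \{d : \langle \nabla c_i(x^*), d\rangle = 0 \text{ for } i \in \mathcal{E},\ \langle \nabla c_i(x^*), d\rangle \ge 0 \text{ for } i \in \mathcal{A}(x^*)\cap\mathcal{I}\}$ together with the tangent cone $T(x^*)$ of the feasible set at $x^*$, consisting of all limits of feasible secant directions. The inclusion $T(x^*) \subseteq F(x^*)$ is immediate: a first-order Taylor expansion of each active constraint along a feasible sequence converging to $x^*$ forces any tangent direction to satisfy the defining (in)equalities of $F(x^*)$.

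Second, and this is the technical heart, I would prove the reverse inclusion $F(x^*) \subseteq T(x^*)$ using LICQ. Given $d \in F(x^*)$, I would construct an explicit $C^1$ feasible arc $x(t)$ with $x(0)=x^*$ and $\dot{x}(0)=d$ by solving, via the implicit function theorem, the system $c_i(x(t)) = t\,\langle \nabla c_i(x^*), d\rangle$ for $i \in \mathcal{A}(x^*)$; LICQ guarantees that the Jacobian of the active constraints has full row rank, so such a parametrization exists and is continuously differentiable. Feasibility of inactive inequalities for small $t$ follows from continuity, while feasibility of the active inequalities follows from the sign condition $\langle \nabla c_i(x^*), d\rangle \ge 0$. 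I expect this implicit-function-theorem construction to be the main obstacle, since it is the single place where the constraint qualification does all the work and where the two feasibility checks must be verified carefully; conclude $T(x^*) = F(x^*)$.

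Finally, since $x^*$ is a local minimizer, for every $d \in T(x^*) = F(x^*)$ the inequality $f(x(t)) \ge f(x^*)$ for small $t>0$ along the arc realizing $d$ forces $\langle \nabla f(x^*), d\rangle \ge 0$. Thus no $d$ simultaneously satisfies the linear system defining $F(x^*)$ and $\langle \nabla f(x^*), d\rangle < 0$. Feeding this infeasibility into Farkas' lemma (a theorem of the alternative for linear inequalities) yields nonnegative multipliers $\lambda_i^*$ for $i \in \mathcal{A}(x^*)\cap\mathcal{I}$ and free multipliers for $i \in \mathcal{E}$ with $\nabla f(x^*) = \sum_{i \in \mathcal{A}(x^*)} \lambda_i^*\,\nabla c_i(x^*)$. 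Setting $\lambda_i^* = 0$ for every inactive inequality index $i \in \mathcal{I}\setminus\mathcal{A}(x^*)$ extends this to the full stationarity equation over $\mathcal{E}\cup\mathcal{I}$ and makes complementary slackness $\lambda_i^* c_i(x^*) = 0$ hold automatically; together with the primal feasibility of $x^*$ and the dual feasibility $\lambda_i^* \ge 0$, this establishes all five KKT relations.
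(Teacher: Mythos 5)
The paper itself gives no proof of this lemma: it is imported verbatim as a preliminary, with the citation to Nocedal and Wright standing in for the argument. Your proposal essentially reconstructs the proof from that cited source, so there is nothing in the paper to contrast it with; the route you describe --- show the tangent cone $T(x^*)$ equals the linearized cone $F(x^*)$ under LICQ, deduce $\langle \nabla f(x^*), d\rangle \ge 0$ on $F(x^*)$ from local optimality, then extract multipliers by Farkas' lemma and zero out the inactive ones to get complementary slackness --- is exactly the classical argument, and it is correct in outline.

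One technical point in your sketch deserves tightening, precisely at the step you identified as the heart of the proof. The system $c_i(x(t)) = t\,\langle \nabla c_i(x^*), d\rangle$, $i \in \mathcal{A}(x^*)$, has $|\mathcal{A}(x^*)|$ equations in $n$ unknowns, so the implicit function theorem does not apply to it as written when $|\mathcal{A}(x^*)| < n$; full row rank of the active-constraint Jacobian alone is not enough. The standard fix is to choose a matrix $Z$ whose columns span a complement of the row space of that Jacobian and solve the \emph{square} system $c_i(x) = t\,\langle \nabla c_i(x^*), d\rangle$ ($i \in \mathcal{A}(x^*)$), $Z^\top\left( x - x^* - t d\right) = 0$, whose Jacobian in $x$ at $(x^*,0)$ is nonsingular by LICQ. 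This augmentation is also what forces $\dot{x}(0) = d$: without the extra equations the solution arc is not unique and need not have the prescribed tangent, which would break the final step where you pass from $f(x(t)) \ge f(x^*)$ to $\langle \nabla f(x^*), d\rangle \ge 0$ for the given $d$. With that repair, the rest of your argument (Taylor expansion for $T \subseteq F$, continuity for the inactive constraints, Farkas with free multipliers on $\mathcal{E}$ and nonnegative ones on the active inequalities) goes through and yields all five conditions in the stated form.
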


\section{The Approximation Method}\label{prodes}
This section formulates the quantum state decomposition task as a linearly constrained optimization problem. Using the three-partite system as a paradigm, the theoretical analysis naturally extends to $n$-partite systems. Let $\mathcal{A}, \mathcal{B}, \mathcal{C}$ denote linear operators acting on symmetric matrices $\alpha$, $\beta$, and $\gamma$ through the transformations $\mathcal{A}\left( \alpha\right) =\alpha^{\Gamma_A}$, $\mathcal{B}\left( \beta\right) =\beta^{\Gamma_B}$, $\mathcal{C}\left( \gamma\right) =\gamma^{\Gamma_C}$, respectively. We introduce auxiliary variables $\phi=\mathcal{A}\left( \alpha\right) $, $\psi=\mathcal{B}\left( \beta\right) $, $\omega=\mathcal{C}\left( \gamma\right) $, along with trace-normalized matrices $L$, $S$ and $T$. With a given state $\rho$, the optimization problem is formulated as follows
\begin{align*}
	\min\quad&f\left( a,b,c,\alpha,\beta,\gamma\right) =\frac{1}{2}\left\|\rho-a\alpha-b\beta-c\gamma \right\|^2\\\notag
	s.t.\quad&\mathcal{A}\left( \alpha\right) =\phi,\mathcal{B}\left( \beta\right) =\psi,\mathcal{C}\left( \gamma\right) =\omega,\\\notag
	&\alpha-L=0,\beta-S=0,\gamma-T=0,\\\notag
	&\alpha\succeq0, \phi\succeq0, \mathrm{tr}\left( L\right) =1,\\\notag
	&\beta\succeq0, \psi\succeq0, \mathrm{tr}\left( S\right) =1,\\\notag
	&\gamma\succeq0, \omega\succeq0, \mathrm{tr}\left( T\right) =1,\\\notag
	&a+b+c=1, a\ge0, b\ge0, c\ge0.\notag
\end{align*}

To facilitate analysis, we vectorize all matrix variables with the following notation
\begin{align*}
	y&=\left( a,b,c\right) ^\top,\\
	x&=\left( \mathrm{vec}\left( \alpha\right)^\top ,\mathrm{vec}\left( \beta\right)^\top ,\mathrm{vec}\left( \gamma\right)^\top \right) ^\top,\\
	z&=\left(\mathrm{vec}\left( \phi\right)^\top ,\mathrm{vec}\left( \psi\right)^\top ,\mathrm{vec}\left( \omega\right)^\top ,\mathrm{vec}\left( L\right)^\top ,\mathrm{vec}\left( S\right)^\top ,\mathrm{vec}\left( T\right)^\top \right) ^\top.
\end{align*}
The linear operator $\mathcal{A}: \mathbb{R}^{n^2}\to \mathbb{R}^{n^2}$ admits a matrix representation $A_{\alpha}$ with block structure
\begin{align*}
	A_{\alpha}\left( id_1,id_2\right) =I_{n_2n_3},
\end{align*}
where index sets $id_1$ and $id_2$ are defined as
\begin{align*}
	id_1 &= \left\lbrace \left( l-1\right) n_2n_3 + 1 : ln_2n_3 \mid l = 1,\ldots,n^2\right\rbrace , \\
	id_2 &= \left\lbrace \left( k-1\right) n + \left( j-1\right) n_1 + i \mid i,k=1,\ldots,n_1;\ j=1,\ldots,n_2n_3\right\rbrace 
\end{align*}
with $n=n_1n_2n_3$. Operators $\mathcal{B}$ and $\mathcal{C}$ have analogous matrix representations $B_{\beta}$ and $C_{\gamma}$.

The problem is reformulated in standard constrained form:
\begin{align}\label{problem}
	\min&\quad f\left( x,y\right) +\delta_{\mathcal{Z}}\left( z\right) +\delta_{\mathcal{Y}}\left( y\right) +\delta_{\mathcal{X}}\left( x\right) \notag\\
	\mathrm{s.t.}&\quad Ax-z=0,
\end{align}
where
\begin{itemize}
	\item $A$ denotes the full column-rank block-diagonal operator:
	\begin{align*}
		A = \begin{pmatrix}
			A_\alpha & 0 & 0 \\
			0 & B_\beta & 0 \\
			0 & 0 & C_\gamma \\
			I_{n^2} & 0 & 0 \\
			0 & I_{n^2} & 0 \\
			0 & 0 & I_{n^2}
		\end{pmatrix}.
	\end{align*}
\item $f\left( x,y\right) =\frac{1}{2}\left\|\rho-a\alpha-b\beta-c\gamma \right\|^2$ is a smooth and marginally convex objective function (See Appendix \ref{app1}).
\item $\delta_{\mathcal{X}}$,$\delta_{\mathcal{Y}}$, $\delta_{\mathcal{Z}}$ are indicator functions for the convex sets:
\begin{align*}
	\mathcal{X} &= \left\lbrace x  \mid  \alpha  \succeq 0,\ \beta  \succeq 0,\  \gamma  \succeq 0 \right\rbrace , \\
	\mathcal{Y} &= \left\lbrace y \mid a,b,c \geq 0,\ a+b+c=1 \right\rbrace , \\
	\mathcal{Z} &= \left\lbrace  z  \mid  \phi\succeq 0 , \psi\succeq 0 ,\omega  \succeq 0,\ \text{tr}\left( L\right) =\text{tr}\left( S\right) =\text{tr}\left( T\right) =1 \right\rbrace .
\end{align*}
\end{itemize}

To address the limitations of the indicator function in the original problem (\ref{problem}), we propose an approximate model incorporating a quadratic penalty term $\frac{\xi}{2}\left\|p-z\right\|^2$, where the positive hyperparameter $\xi$ balances constraint adherence and objective minimization. When $\xi\to+\infty$, the problem (\ref{proapp}) is equivalent to problem (\ref{problem}).  This yields the regularized optimization problem:
\begin{align}\label{proapp}
	\min&\quad h\left( x,y,z,p\right) =f\left( x,y\right) +\delta_{\mathcal{Z}}\left( p\right) +\delta_{\mathcal{Y}}\left( y\right) +\delta_{\mathcal{X}}\left( x\right) +\frac{\xi}{2}\left\|p-z\right\|^2\notag\\
	s.t.&\quad Ax-z=0,
\end{align}
with the corresponding augmented Lagrangian
\begin{align}\label{aulag}
	\mathcal{L}_{\eta}\left( x,y,p,z,\lambda\right) =h\left( x,y,z,p\right) +\left\langle \lambda,Ax-z\right\rangle+\frac{\eta}{2}\left\|Ax-z\right\|^2,
\end{align}
where $\lambda$ is the Lagrange multiplier and $\eta>0$ is the penalty parameter. While the classical ADMM iteration (\ref{ADMM}) struggles with non-separability in the $x$-subproblem, we introduce a partial linearization technique. 

\begin{equation}\label{ADMM}
\left\{\begin{aligned}
	y^{k+1} &=\arg\min_{y\in\mathbb{R}^3}\mathcal{L}_{\eta}\left( x^k,y,p^k,z^k,\lambda^k\right) ,\\
	x^{k+1} &=\arg\min_{x\in\mathbb{R}^{n^2}}\mathcal{L}_{\eta}\left( x,y^{k+1},p^k,z^k,\lambda^k\right) ,\\
	p^{k+1} &=\arg\min_{p\in\mathbb{R}^{n^2}}\mathcal{L}_{\eta}\left( x^{k+1},y^{k+1},p,z^k,\lambda^k\right) ,\\
	z^{k+1} &=\arg\min_{z\in\mathbb{R}^{n^2}}\mathcal{L}_{\eta}\left( x^{k+1},y^{k+1},p^{k+1},z,\lambda^k\right) ,\\
	\lambda^{k+1}& =\lambda^k+\eta\left( Ax^{k+1}-z^{k+1}\right) .
\end{aligned}\right.
\end{equation}

Instead of linearizing the entire Lagrangian, we selectively linearize $f\left( x,y\right) $, exploiting its quadratic properties to derive the Linearized Proximal ADMM (LPADMM): 
\begin{numcases}{}
	y^{k+1} =\arg\min_{y\in\mathbb{R}^3}\left\lbrace \delta_{\mathcal{Y}}\left( y\right) +f\left( x^k,y\right) +\frac{\mu_1}{2}\left\|y-y^k\right\|^2\right\rbrace, \label{eq:case1} \\
	x^{k+1} =\arg\min_{x\in\mathbb{R}^{n^2}}\left\lbrace \delta_{\mathcal{X}}\left( x\right) +\left\langle \nabla_xf\left( x^k,y^{k+1}\right) ,x\right\rangle+\left\langle \lambda^k,Ax\right\rangle+\frac{\eta}{2}\left\|Ax-z^k\right\|^2+\frac{\mu_2}{2}\left\|x-x^k\right\|^2\right\rbrace, \label{eq:case2} \\
	p^{k+1} =\arg\min_{p\in\mathbb{R}^{n^2}}\left\lbrace \delta_{\mathcal{Z}}\left( p\right) +\frac{\xi}{2}\left\|p-z^k\right\|^2+\frac{\mu_3}{2}\left\|p-p^k\right\|^2\right\rbrace, \label{eq:case3} \\
	z^{k+1} =\arg\min_{z\in\mathbb{R}^{n^2}}\left\lbrace-\left\langle\lambda^k,z\right\rangle+\frac{\eta}{2}\left\|Ax^{k+1}-z\right\|^2+\frac{\xi}{2}\left\|p^{k+1}-z\right\|^2\right\rbrace, \label{eq:case4} \\
	\lambda^{k+1} =\lambda^k+\eta\left( Ax^{k+1}-z^{k+1}\right) , \label{eq:case5}
\end{numcases}
where $\mu_1$,$\mu_2$, $\mu_3$ are proximal regularization coefficients that stabilize convergence. This approach preserves the problem’s inherent structure while enabling tractable subproblem solutions. The method ensures feasibility via quadratic penalties and spectral projections, with theoretical guarantees of convergence established in Appendix \ref{convergence}.

\subsection{Subproblems}
The implementation of the LPADMM involves iterative solutions to three key subproblems. The $y$-subproblem constitutes a convex quadratic programming task, which we efficiently solve using MATLAB's built-in \texttt{quadprog} function. 

For the $x$-subproblem, the special structure $A^\top A=2I$ enables equivalence to a projection onto the positive semidefinite cone. This reduces to solving:
\begin{align*}
	\arg\min_{x\in\mathcal{X}}\;\left\|x-g\right\|_2^2,
\end{align*}
where $g=\frac{-A^\top\lambda^k+\mu_2x^k+\eta A^\top z^k-\nabla_xf\left( x^k,y^{k+1}\right) }{2\eta+\mu_2}$.

Considering the separability of the function, we restructure the vectorial optimization problem by decomposing and reshaping the vectors into matrices. This transformation leads to a new minimization problem below: $$\arg\min_{x\in\mathcal{X}}\left\|\begin{pmatrix}
	\alpha \\ \beta\\ \gamma
\end{pmatrix}-\begin{pmatrix}
	\mathrm{mat}\left( g_1\right) \\\mathrm{mat}\left( g_2\right) \\\mathrm{mat}\left( g_3\right) 
\end{pmatrix}\right\|_F^2.$$
The results can be obtained by projecting $g_1$, $g_2$, $g_3$ to the positive semi-definite subspace, which is resolved via spectral decomposition (Lemma \ref{proj}).
\begin{lemma}\label{proj}
	For any real symmetric matrix $A$ with eigenvalue decomposition $A=U\Sigma U^\top$, its projection onto the positive semidefinite cone is given by: 
	\begin{equation}
		\mathcal{P}_{\succeq 0}\left( A\right)  = U\Sigma_+ U^\top,\quad \text{where } \left( \Sigma_+\right) _{ii} = \max\left( \Sigma_{ii}, 0\right) .
	\end{equation}
\end{lemma}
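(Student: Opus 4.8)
The plan is to prove this by recognizing the projection as the unique solution of the strictly convex problem $\mathcal{P}_{\succeq 0}(A) = \arg\min_{X \succeq 0} \|X - A\|_F^2$ and then exploiting orthogonal invariance of the Frobenius norm to reduce to a diagonal instance. First I would record that the positive semidefinite cone is closed and convex and that $X \mapsto \|X - A\|_F^2$ is strictly convex and coercive, so the minimizer exists and is unique; this justifies treating $\mathcal{P}_{\succeq 0}(A)$ as a single well-defined matrix and lets optimality be certified by matching a lower bound with an explicit achiever.

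The core reduction uses the fact that for any orthogonal $U$ the map $X \mapsto U^\top X U$ is a bijection of the PSD cone onto itself and preserves the Frobenius norm. Substituting $Y = U^\top X U$ and using $A = U \Sigma U^\top$ gives $\|X - A\|_F^2 = \|Y - \Sigma\|_F^2$, so the problem becomes $\min_{Y \succeq 0} \|Y - \Sigma\|_F^2$ with $\Sigma$ diagonal. I would then expand $\|Y - \Sigma\|_F^2 = \sum_{i \neq j} Y_{ij}^2 + \sum_i (Y_{ii} - \Sigma_{ii})^2$ and discard the nonnegative off-diagonal contribution to obtain the lower bound $\|Y - \Sigma\|_F^2 \geq \sum_i (Y_{ii} - \Sigma_{ii})^2$.

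The decisive step invokes the elementary but essential property that every PSD matrix has nonnegative diagonal, $Y_{ii} = e_i^\top Y e_i \geq 0$. Hence each scalar term $(Y_{ii} - \Sigma_{ii})^2$ is minimized over $Y_{ii} \geq 0$ at $Y_{ii} = \max(\Sigma_{ii}, 0)$, yielding the global lower bound $\sum_{i:\,\Sigma_{ii} < 0} \Sigma_{ii}^2$. I would then verify that the diagonal matrix $\Sigma_+$ is feasible (diagonal with nonnegative entries, hence PSD) and attains this bound exactly, since its off-diagonal entries vanish and its diagonal matches the per-coordinate optimizers. By uniqueness of the minimizer, $Y^\star = \Sigma_+$, and transforming back gives $X^\star = U \Sigma_+ U^\top$.

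The main obstacle is conceptual rather than computational: one must ensure the lower-bound argument does not secretly presuppose that the optimal $Y$ is diagonal. The argument circumvents this by bounding from below for an \emph{arbitrary} feasible $Y$ and separately exhibiting $\Sigma_+$ as an achiever, so that optimality follows without any a priori structural assumption on $Y$. A minor point to handle cleanly is the equivalence between $X \succeq 0$ and $Y = U^\top X U \succeq 0$, which rests only on $U$ being orthogonal; once this bijection is stated, the remaining estimates are routine.
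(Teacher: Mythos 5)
Your proof is correct and complete. Note that the paper itself states Lemma~\ref{proj} without any proof, treating it as a classical fact (it is the standard result on Frobenius-norm projection onto the PSD cone), so there is no paper argument to compare against; your write-up supplies exactly the canonical justification. All the key steps are sound: uniqueness of the minimizer from strict convexity of $X \mapsto \left\|X-A\right\|^2$ over the closed convex cone, the orthogonal-invariance reduction $Y = U^\top X U$ to a diagonal target $\Sigma$, the lower bound $\left\|Y-\Sigma\right\|^2 \geqslant \sum_i \left( Y_{ii}-\Sigma_{ii}\right)^2 \geqslant \sum_{i:\,\Sigma_{ii}<0}\Sigma_{ii}^2$ valid for \emph{every} feasible $Y$ via $Y_{ii} = e_i^\top Y e_i \geqslant 0$, and the verification that $\Sigma_+$ attains it — which correctly avoids assuming the optimizer is diagonal.
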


The calculations of $\phi,\psi,\xi$ are similar. 

The computation of parameters $L, S, T$ requires solving trace-constrained problems of the form:
\begin{align*}
	\min &\quad\frac{1}{2}\left\|V-S\right\|_F^2\\
	s.t. &\quad \mathrm{tr}\left( S\right) =1.
\end{align*}
The Lagrangian formulation yields:
\begin{align*}
	\mathcal{L}=\frac{1}{2}\left\|V-S\right\|_F^2-\lambda\left( \mathrm{tr}\left( S\right) -1\right) .
\end{align*}
After calculating the partial gradient, we get
\begin{align}
	\frac{\partial\mathcal{L}}{\partial\lambda}=\mathrm{tr}\left( S\right) -1=0, \notag\\
	\frac{\partial\mathcal{L}}{\partial S}=S-V-\lambda I=0, \label{1}
\end{align}
yielding $S_{ii}-V_{ii}-\lambda=0, S_{\left( i,j\right) }=V_{\left( i,j\right) }$ for $i\ne j$.
According to $\mathrm{tr}\left( S\right) =\sum_{i=1}^nS_{ii}=1$, $\lambda$ can be obtained by $\sum_{i=1}^n\left( S_{ii}-V_{ii}\right) -n\lambda=0$. The results of $S$ can be got by substituting $\lambda$ into (\ref{1}).

Now we expand the above method to more complicated systems. For a four-partite system, we extend the formulation to:
\begin{align}
	\min\quad&f\left( p_1,\cdots, p_7,\alpha_1,\cdots,\alpha_7\right) =\frac{1}{2}\left\|\rho-p_1\alpha_1-\cdots-p_7\alpha_7\right\|^2\\
	s.t.\quad&\alpha_i\succeq0,tr\left( \alpha_i\right) =1,\notag\\
	&\alpha_1^{\Gamma_A}\succeq0,\alpha_2^{\Gamma_B}\succeq0,\alpha_3^{\Gamma_C}\succeq0,\alpha_4^{\Gamma_D}\succeq0,\notag\\
	&\alpha_5^{\Gamma_{AB}}\succeq0,\alpha_6^{\Gamma_{AC}}\succeq0,\alpha_7^{\Gamma_{AD}}\succeq0,\notag\\
	&p_1+\cdots+p_7=1, p_i\ge0,\notag\\
	&\forall i\in{1,2,\cdots,7}.\notag
\end{align}
The definitions are similiar to the three-partite system. The mapping from $\alpha_i$ to its transformed form is a linear mapping and the marginal convexity is kept. Also, we can introduce another fourteen auxiliary parameters to solve the problem.

For the $n$-partite system, when $n$ is odd, the number of the transformed matrices is $\begin{pmatrix}n\\1\end{pmatrix}+\begin{pmatrix}n\\2\end{pmatrix}+\cdots+\begin{pmatrix}n\\\frac{n-1}{2}\end{pmatrix}$. When $n$ is even, the number of the transformed matrices is $\begin{pmatrix}n\\1\end{pmatrix}+\begin{pmatrix}n\\2\end{pmatrix}+\cdots+\begin{pmatrix}n\\\frac{n}{2}\end{pmatrix}$. The sum equals to $2^{n-1}-1$, which means $2^n-2$ auxiliary parameters are needed.

\section{Numerical Experiments}\label{numexp}
To validate the effectiveness of our method for approximating quantum states through form (\ref{proform}), we conduct systematic numerical experiments across three scenarios: three-qubit W/GHZ states with noises and five-partite multiGHZ states with noises. The algorithm employs stopping criteria based on primal and dual residuals, configured with parameters $\mu_1=0.1, \mu_2=1.1, \mu_3=\xi, \eta=2\xi+1$ and convergence tolerance $\epsilon=10^{-8}$. Leveraging the orthogonal structure $A^\top A=2I$ and (\ref{prop:sufficient_descent}), we can tight the configuration by setting $\mu_2=0$ and $\eta>\max\left\lbrace L,2\xi\right\rbrace $. Each experiment executes 30 independent trials with randomized initial points to account for solution variability.

\begin{example}\label{ex1}
	Consider a three-qubit system with state construction: $\Omega=W_3W_3^\top+lI_8$, $W_3=e_0\otimes e_0\otimes e_1+e_0\otimes e_1\otimes e_0+e_1\otimes e_0\otimes e_0$ where $e_0=\begin{pmatrix}1\\0\end{pmatrix}, e_1=\begin{pmatrix}0\\1\end{pmatrix}$ and $l\in\mathbb{R}_{+}$. The normalized state becomes $\rho=\Omega/\mathrm{tr}(\Omega)$.
\end{example}
We have known that if $l>\sqrt{2}$, the state must be a bi-PPT state. We test $l=3$ and $l=0$ cases. For $l=3$, it must can be decomposed into the convex combination of bi-PPT states and the original objective value $f(x,y)$ can be arbitrarily small. To find a suitable penalty parameter, we sweep $\xi\in[100,1000]$ with 50-step increments, monitoring both $f(x,y)$ and constraint violation $\left\|p-z\right\|^2$. We also recorded whether the variables satisfied the constraints. Figure \ref{re1} demonstrates that when the state can be decomposed into the convex combination of bi-PPT states,$\xi=100$ suffices to achieve approximation precision $e^{-10}$ with constraint violation $e^{-16}$. This confirms our method's capability to find valid bi-PPT decompositions when theoretically permitted.
\begin{figure}[H]
	\centering
	\includegraphics[width=0.45\textwidth]{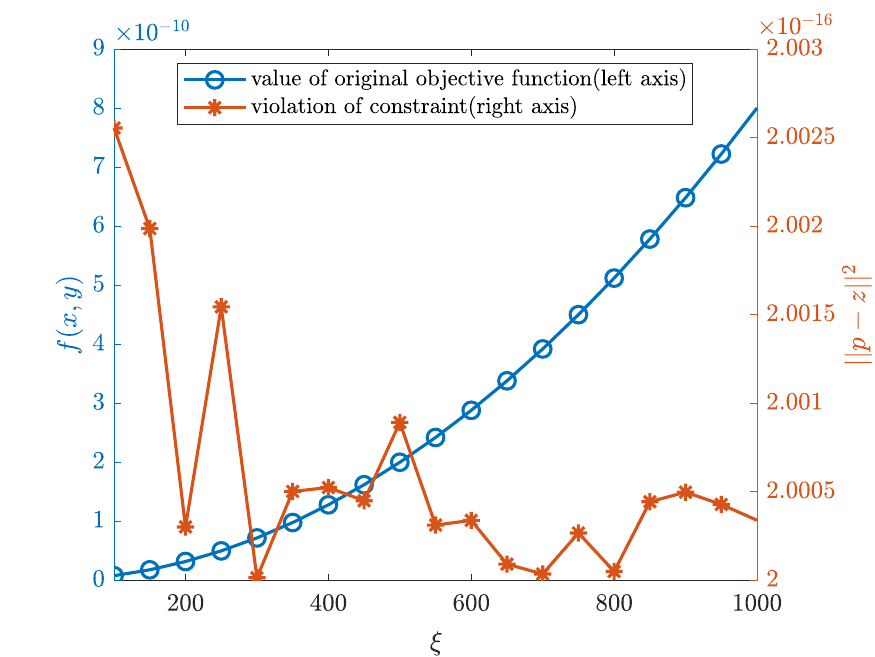}
	\caption{Original objective function $f(x,y)$ (left axis) and penalty term $\left\|p-z\right\|^2$ (right axis) under varying $\xi$ for state $W_3W_3^\top+3I_8$.}%设置标题，且会自动编号,图在下
	\label{re1}%设置标签
\end{figure}
Figure \ref{re2} reveals distinct behavior for $l=0$ (non-decomposable case). Here, $\xi$ requires escalation to $>800$ for violation control, with achievable precision limited to $e^{-1}$. The observed tradeoff between increasing $f(x,y)$ and decreasing $\left\|p-z\right\|^2$ confirms theoretical predictions. Despite residual $p-z$ discrepancies, all solutions satisfy problem (\ref{problem}) requirements. The $\xi$ plateau in $[800,1000]$ suggests it can be set as 900 to balance approximation error and constraint satisfaction.
\begin{figure}[H]
	\centering
	\includegraphics[width=0.4\textwidth]{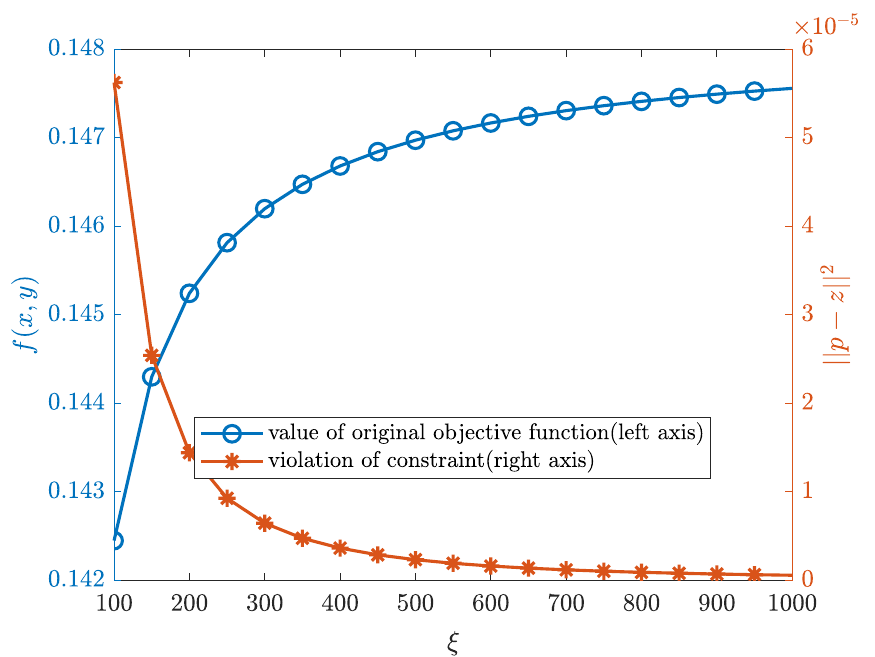}
	\caption{Original objective function $f(x,y)$ (left axis) and penalty term $\left\|p-z\right\|^2$ (right axis) under varying $\xi$ for state $W_3W_3^\top$.}%设置标题，且会自动编号,图在下
	\label{re2}%设置标签
\end{figure} 

The experiments are also excuated on the GHZ states with white noise (See Example \ref{ex2}), where the state is theoretically guaranteed to be bi-PPT when $l>=1$. To investigate the decomposition threshold, we systematically test $x=\left( 0.1,0.2,\cdots,1\right) $ under fixed penalty parameter $\xi=1000$ ensuring strict constraint enforcement.
\begin{example}\label{ex2}
	Consider a three-qubit system with state construction: $\Omega=GHZ_3GHZ_3^\top+lI_8$, $GHZ_3=e_0\otimes e_0\otimes e_0+e_1\otimes e_1\otimes e_1$ where $e_0=\begin{pmatrix}1\\0\end{pmatrix}, e_1=\begin{pmatrix}0\\1\end{pmatrix}$ and $l\in\mathbb{R}_{+}$. The normalized state becomes $\rho=\Omega/\mathrm{tr}(\Omega)$.
\end{example}
Figure \ref{re3} illustrates the algorithm's iteration process across tested $l$ values. For each $l$, we perform 30 trials and select the minimum-error trajectory. Despite comparable iteration counts across all cases, final objective values exhibit monotonic decay with increasing $l$, aligning with theoretical expectations.

Table \ref{bound1} quantifies the decomposition precision, revealing critical threshold behavior: At $l=0.9$, the residual error remains at $6.73e^{-5}$ – three orders above machine precision – confirming the state $GHZ_3GHZ_3^\top+0.9I_8$ cannot satisfy (\ref{proform}). This contrasts sharply with $l=1$ achieving near-machine precision ($2.91e^{-9}$), demonstrating exact decomposition feasibility when $l\ge1$.
\begin{figure}[H]
	\centering
	\includegraphics[width=0.5\textwidth,height=0.26\textheight]{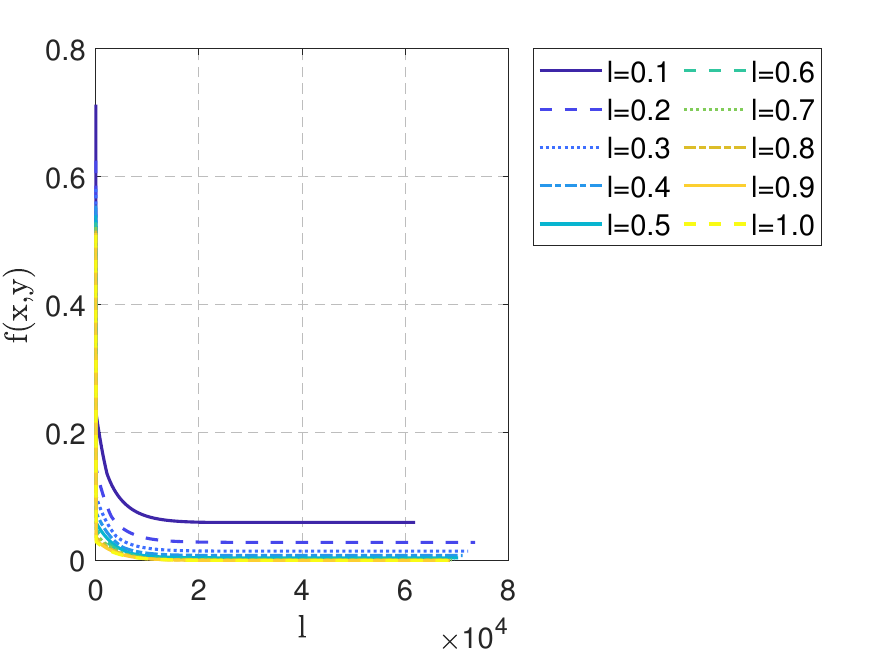}
	\caption{Convergence of GHZ states with different noise coefficients $l$ under $\xi=1000$.}%设置标题，且会自动编号,图在下
	\label{re3}%设置标签
\end{figure}

\begin{table}[H]\setlength{\tabcolsep}{1.5mm}{
		\footnotesize
		\centering
		\caption{Value of original objective function of states $GHZ_3GHZ_3^\top+lI$.}%设置标题，且会自动编号，表在上
		\label{bound1}
		\begin{tabular}{ccccccccccc}%每一列对齐方式：l左对齐，r右对齐，c居中对齐,|为产生表格竖线，||产生双竖线
			%此处即为5列，分别设定
			%编写内容
			\toprule%产生表格横线
			Noise parameter ($l$)&0.1&0.2&0.3&0.4&0.5&0.6&0.7&0.8&0.9&1\\\midrule
			$f(x,y)$&0.0594&0.0281&0.0144&0.0076&0.0040&0.0020&0.0009&0.0003&6.7304$e^{-5}$&2.9110$e^{-9}$\\\bottomrule
	\end{tabular}}
	%具体的设定可以打开相应的宏包说明文件进行查看
	%booktab三线表,longtab跨页长表格，tabu综合表格宏包content...
\end{table}

To further validate the generalizability of our method, we conduct experiments on a ​five-partite quantum system​​ with uniform subsystem dimensions $(3,3,3,3,3)$, representing a challenging high-dimensional scenario with $243\times243$ density matrices. This configuration tests our algorithm's capacity to handle complex multipartite systems beyond symmetric qubit arrangements.
\begin{example}\label{ex3}
	Consider a five-partite quantum system with dimensions $$(d_1,d_2,d_3,d_4,d_5)=(3,3,3,3,3).$$ We construct a noise-perturbed GHZ state: $\Omega=GHZ_5GHZ_5^\top+lI_{243}$, with $l\in\mathbb{R}_{+}$, $GHZ_5=\otimes^5_{k=1}e_k^{\left( 1\right) }+\otimes^5_{k=1}e_k^{\left( 3\right) }$. Here $e_k^{\left( m\right) }$ denotes the canonical basis vector with 1 at the $m$-th position. The normalized state becomes $\rho=\Omega/\mathrm{tr}(\Omega)$.
\end{example}
We first set $l=2$ and obtain the final error. The algorithm configuration employs $\xi=1000$, balancing computational efficiency with constraint enforcement. Figure \ref{reex3} details the convergence characteristics over 130000 iterations. 
\begin{table}[H]
	\footnotesize
	\centering
	\caption{Values of original objective function and violation of states $GHZ_5GHZ_5^\top+lI$.}%设置标题，且会自动编号，表在上
	\label{bound2}
	\begin{tabular}{cccc}%每一列对齐方式：l左对齐，r右对齐，c居中对齐,|为产生表格竖线，||产生双竖线
		%此处即为5列，分别设定
		%编写内容
		\toprule%产生表格横线
		Noise parameter ($l$)&0.9&1.1&2\\\midrule
		$f(x,y)$&3.0935$e^{-7}$&8.0043$e^{-10}$&8.0321$e^{-10}$\\\midrule
		$\left\|p-z\right\|^2$&6.3122$e^{-13}$&1.9962$e^{-16}$&2.2957$e^{-16}$\\\bottomrule
	\end{tabular}
	%具体的设定可以打开相应的宏包说明文件进行查看
	%booktab三线表,longtab跨页长表格，tabu综合表格宏包content...
\end{table}

\begin{figure}[H]
	\centering
	\begin{subfigure}[htpb]{0.325\textwidth} % 修复1：添加必要的宽度参数
		\centering % 修复2：添加居中指令
		\includegraphics[width=\linewidth]{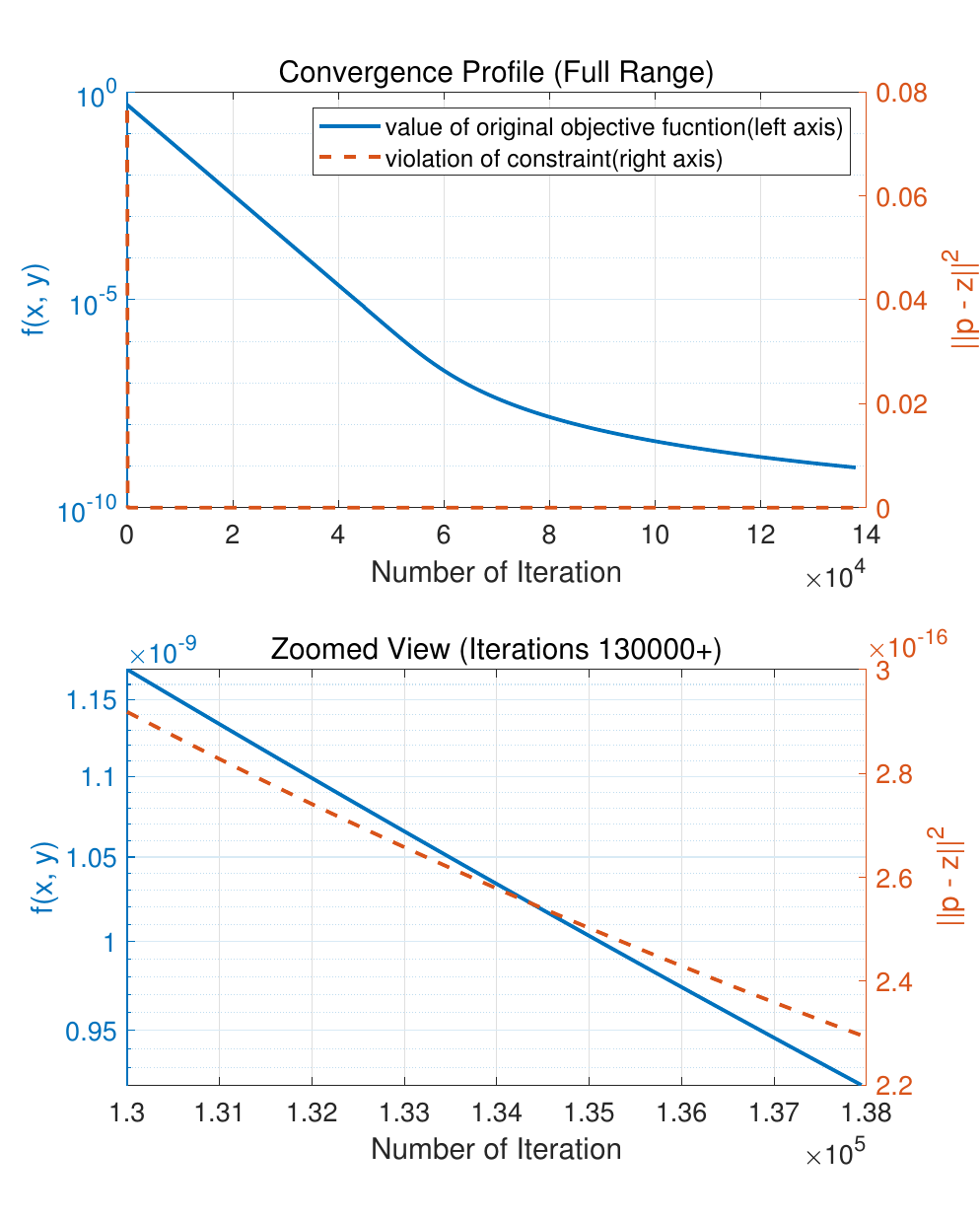} % 修复3：使用\linewidth
		\caption{Case: $l=2$.}
		\label{re4}
	\end{subfigure}
	\hspace{0.1em}
	\begin{subfigure}[htpb]{0.325\textwidth} % 修复1：添加必要的宽度参数
		\centering % 修复2：添加居中指令
		\includegraphics[width=\linewidth]{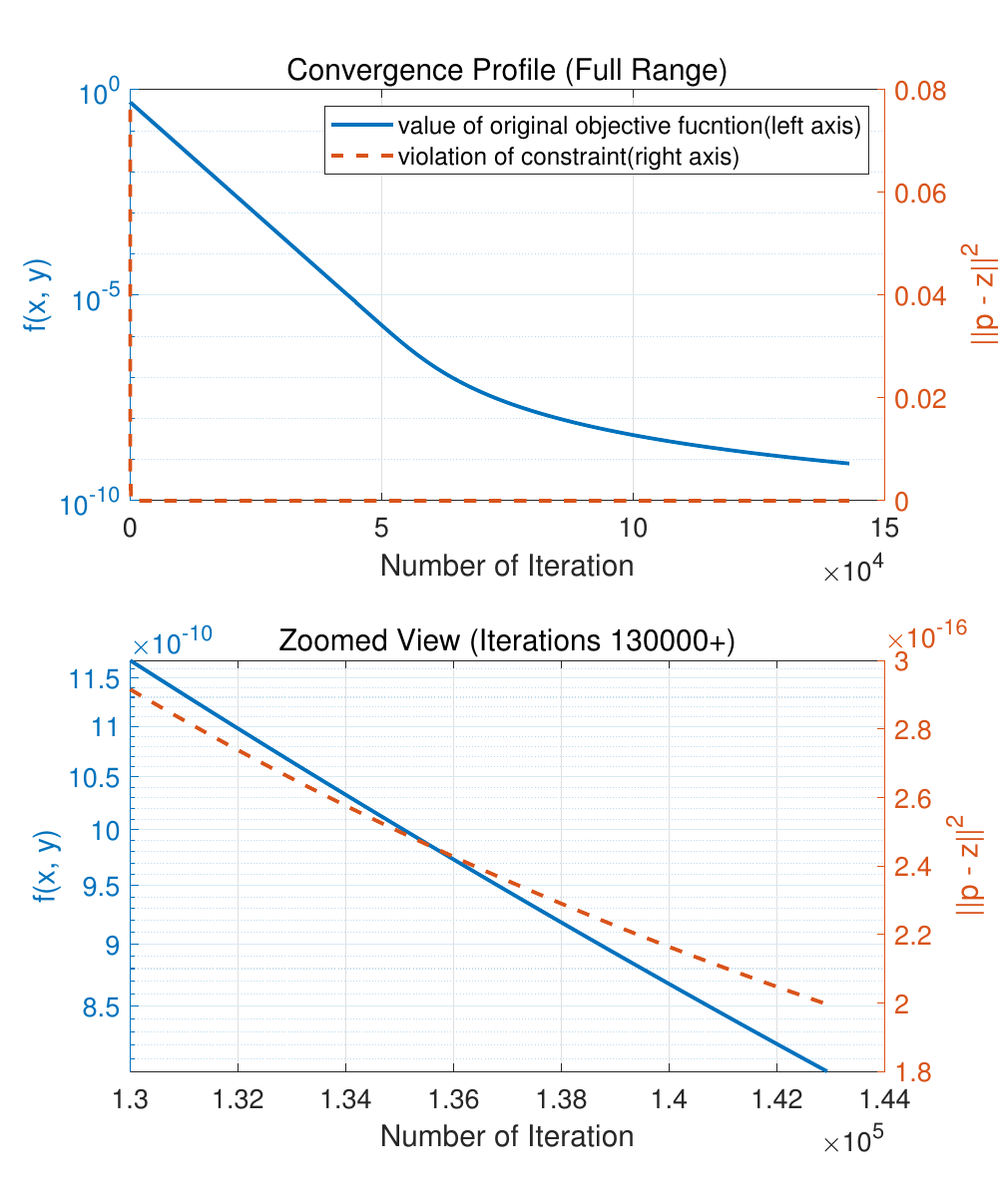} % 修复3：使用\linewidth
		\caption{Case: $l=1.1$.}
		\label{re6}
	\end{subfigure}
	\hfill
	\begin{subfigure}[htpb]{0.325\textwidth} % 修复1：添加必要的宽度参数
		\centering % 修复2：添加居中指令
		\includegraphics[width=\linewidth]{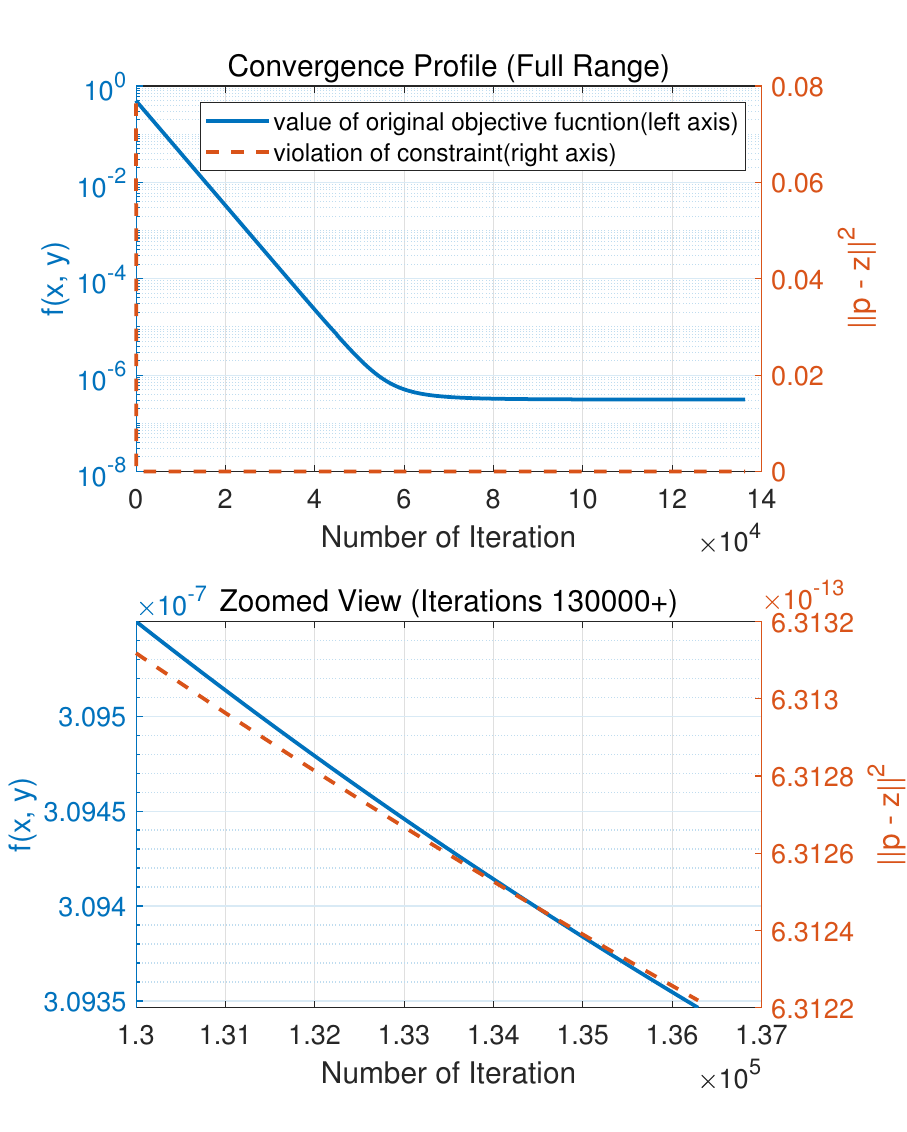} % 修复3：使用\linewidth
		\caption{Case: $l=0.9$.}%设置标题，且会自动编号,图在下
		\label{re5}%设置标签
	\end{subfigure}
	\caption{Convergence of GHZ states in five-partite system with different parameters $l$.} % 修复5：添加有意义的标题
	\label{reex3}
\end{figure}

The results confirm robust performance of our method in high-dimensional settings. Notably, the violation magnitude approaches machine epsilon, indicating near-exact constraint satisfaction. It can be found that when $l=0.9$, the GHZ state can not be decmoposed as the form (\ref{proform}). The original objective function value can reach $e^{-10}$ under $l=1.1$ and $l=2$. Now We give the last example, which is an unweighted form of multiGHZ state. 
\begin{example}\label{ex4}
	We construct a noise-perturbed multiGHZ state: $\Omega=mGHZ_5mGHZ_5^\top+lI_{243}$, with $l\in\mathbb{R}_{+}$, $mGHZ_5=m\otimes^5_{k=1}e_k^{\left( 1\right) }+n\otimes^5_{k=1}e_k^{\left( 2\right) }+s\otimes^5_{k=1}e_k^{\left( 3\right) }$. The normalized state becomes $\rho=\Omega/\mathrm{tr}(\Omega)$.
\end{example}
We search this example for identifying the influence of the both noise and state coefficients on the decomposition. Two cases are considered: for fixed $m=n=1,s=5$, we change $l$, and for fixed $m=n=1,l=5$, we change $s$. The final results are shown in Table \ref{lastex}, which indicates that when $l=s=5$, the approximation error can reach $e^{-10}$, and if one is fixed as $5$, the other one change $1$, then the state can not be decomposed as the convex combination of bi-PPT states. However, the comparision of 6.4236$e^{-7}$ and 1.0101$e^{-6}$ demonstrates that the decrease of $l$ has more influence for the decomposition.
\begin{table}[H]
	\footnotesize
	\centering
	\caption{Values of original objective function and violation of states $mGHZ_5mGHZ_5^\top+lI$.}%设置标题，且会自动编号，表在上
	\label{lastex}
	\begin{tabular}{ccccccc}%每一列对齐方式：l左对齐，r右对齐，c居中对齐,|为产生表格竖线，||产生双竖线
		%此处即为5列，分别设定
		%编写内容
		\toprule%产生表格横线
		$l$&5&5&5&6&5&4\\\midrule
		$s$&3&4&6&5&5&5\\\midrule
		$f(x,y)$&8.0052$e^{-10}$&8.0157$e^{-10}$&6.4236$e^{-7}$&8.4163$e^{-10}$&8.0042$e^{-10}$&1.0101$e^{-6}$\\\midrule
		$\left\|p-z\right\|^2$&$2.1102e^{-16}$&1.9981$e^{-16}$&1.3008$e^{-12}$&$2.0989e^{-16}$&1.9962$e^{-16}$&2.0463$e^{-12}$\\\bottomrule
	\end{tabular}
	%具体的设定可以打开相应的宏包说明文件进行查看
	%booktab三线表,longtab跨页长表格，tabu综合表格宏包content...
\end{table}

\section{Conclusion}\label{con}
In conclusion, this work presents a successful reformulation of the convex combination of bi-PPT states within a novel optimization framework, establishing the practical viability of the LPADMM for addressing this class of challenges. Through rigorous analysis based on a penalized approximation model, we prove that the sequence generated by LPADMM converges globally to a stationary point, with an iteration complexity of $O(1/\epsilon^2)$ guaranteed for identifying an $\epsilon$-stationary solution. Comprehensive numerical experiments across diverse decomposition paradigms have further validated the effectiveness of our approach, demonstrating its capability to produce satisfactory approximations of the target quantum state.

Notwithstanding these advancements, it is important to acknowledge that the attainment of a stationary point does not inherently guarantee convergence to the global minimum. The methodological constraints arising from the indicator function's non-convex nature currently prevent direct application of our framework to the original problem formulation. Consequently, the development of more precise approximation techniques for handling such non-convexities remains an open challenge for future research.

\section{Acknowledgments}
This work was supported by the NSFC (Grants No.12471427 and 12131004), and the R\&D proiect of Pazhou Lab (Huangpu)(2023K0604).

\appendix
\section{Marginal Smoothness and Convexity}\label{app1}
\begin{lemma}\label{inner}
	For any symmetric positive definite matrices $A$, $B$ $\in\mathbb{S}^n_{++}$, their Frobenius inner product satisfies $\left\langle A,B\right\rangle\ge0 $.
\end{lemma}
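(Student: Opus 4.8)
The plan is to reduce the Frobenius inner product to a trace and then exploit the spectral structure of the positive (semi)definite factors. Since $A$ is symmetric, the Frobenius inner product may be written as $\left\langle A,B\right\rangle = \mathrm{tr}\left( A^\top B\right) = \mathrm{tr}\left( AB\right) $, so it suffices to show $\mathrm{tr}\left( AB\right) \ge0$. First I would invoke the spectral theorem to diagonalize $A$, writing $A = \sum_{i=1}^n \lambda_i u_i u_i^\top$, where $\left\lbrace u_i\right\rbrace $ is an orthonormal eigenbasis and each eigenvalue $\lambda_i>0$ (only $\lambda_i\ge0$ is actually needed).

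Substituting this expansion into the trace and using linearity together with the cyclic identity $\mathrm{tr}\left( u_iu_i^\top B\right) = u_i^\top B u_i$ gives
\begin{align*}
	\left\langle A,B\right\rangle = \mathrm{tr}\left( AB\right) = \sum_{i=1}^n \lambda_i\,\mathrm{tr}\left( u_iu_i^\top B\right) = \sum_{i=1}^n \lambda_i\, u_i^\top B u_i.
\end{align*}
The key observation is then that because $B\in\mathbb{S}^n_{++}$, each quadratic form satisfies $u_i^\top B u_i\ge0$, while each $\lambda_i\ge0$; hence every summand is nonnegative and the total is nonnegative, which is precisely the claim. One could equivalently argue via the positive definite square root: by cyclicity $\mathrm{tr}\left( AB\right) = \mathrm{tr}\left( A^{1/2}BA^{1/2}\right) $, and the congruence transform $A^{1/2}BA^{1/2}$ is positive semidefinite, so its trace (a sum of nonnegative eigenvalues) is nonnegative.

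There is no substantive obstacle here, as the statement is elementary; the only points requiring minor care are correctly identifying the Frobenius inner product with $\mathrm{tr}\left( AB\right) $ using the symmetry of $A$, and justifying the trace manipulations through cyclic invariance. I would also remark that the positive definiteness of both factors in fact upgrades the conclusion to the strict inequality $\left\langle A,B\right\rangle>0$, though only the stated nonnegativity is invoked in the sequel for establishing the marginal convexity and smoothness of the objective.
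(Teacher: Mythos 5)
Your proof is correct, but it takes a genuinely different route from the paper's. The paper factors both matrices via Cholesky decompositions $A=PP^\top$, $B=QQ^\top$ and uses cyclicity of the trace to rewrite $\left\langle A,B\right\rangle$ as a squared Frobenius norm of a product of the factors, i.e.\ $\left\langle A,B\right\rangle=\left\|P^\top Q\right\|^2\ge0$ --- a one-line identity that exhibits the inner product as a norm. You instead diagonalize only $A$ by the spectral theorem and reduce the claim to nonnegativity of the weighted quadratic forms $\sum_i \lambda_i\, u_i^\top B u_i$; your parenthetical square-root variant $\mathrm{tr}\left( A^{1/2}BA^{1/2}\right)$ is essentially the paper's factorization argument in disguise. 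Both arguments are elementary and both, as you observe, need only positive semidefiniteness of the factors --- a point worth emphasizing, since the paper later invokes this lemma (in the proof of marginal convexity, Lemma \ref{convexsmooth}) for matrices $\alpha,\beta,\gamma$ that are merely $\succeq0$ rather than strictly positive definite, so isolating the minimal hypotheses is not idle. Your route additionally makes the upgrade to strict inequality for genuinely definite factors transparent; the paper's route buys a slightly slicker closed-form identity at the cost of factoring both matrices.
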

\begin{proof}
	Let $A=PP^\top$, and $B=QQ^\top$ be Cholesky decompositions. Then:
	\begin{align*}
		\left\langle A,B\right\rangle=tr\left( A^\top B\right) =tr\left( P^\top PQQ^\top\right) =tr\left( Q^\top P^\top PQ\right) =tr\left( PQ\left( PQ\right) ^\top\right) =\left\|PQ\right\|^2\ge0.
	\end{align*}
\end{proof}

\begin{lemma}\label{convexsmooth}
	Under constraints $a+b+c=1, a\ge0, b\ge0, c\ge0$ and $\mathrm{mat}\left( \alpha\right) \succeq0, \mathrm{mat}\left( \beta\right) \succeq0, \mathrm{mat}\left( \gamma\right) \succeq0$, the function $f\left( x,y\right) $ in problem ( \ref{problem}) is:
	\begin{itemize}
		\item[\rm(i)] 1-smooth with respect to $x$,
		\item[\rm(ii)] marginally convex.
	\end{itemize}
\end{lemma}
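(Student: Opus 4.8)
The plan is to prove the two claims separately, since each reduces to a Hessian computation followed by an application of Lemma \ref{convex_twice} and the equivalence between $L$-smoothness and the bound $\nabla^2 f \preceq LI$. Recall $f(x,y)=\frac12\|\rho - a\alpha - b\beta - c\gamma\|^2$, where $x$ collects $\mathrm{vec}(\alpha),\mathrm{vec}(\beta),\mathrm{vec}(\gamma)$ and $y=(a,b,c)^\top$. The key structural observation I would exploit is that $f$ is quadratic in $x$ when $y$ is held fixed, and quadratic in $y$ when $x$ is held fixed, so both marginal Hessians are constant and can be written down explicitly.

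For part (i), the $1$-smoothness with respect to $x$, I would fix $y$ and compute $\nabla_x f$. Writing the residual $R := \rho - a\alpha - b\beta - c\gamma$, the gradient with respect to $\mathrm{vec}(\alpha)$ is $-a\,\mathrm{vec}(R)$, and similarly for $\beta,\gamma$ with coefficients $b,c$. Differentiating again, the marginal Hessian $\nabla_x^2 f$ is the constant matrix $\begin{pmatrix} a^2 I & ab\,I & ac\,I \\ ab\,I & b^2 I & bc\,I \\ ac\,I & bc\,I & c^2 I \end{pmatrix} = (yy^\top)\otimes I$, which is positive semidefinite with largest eigenvalue $\|y\|^2 = a^2+b^2+c^2$. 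Under the simplex constraint $a+b+c=1$ with $a,b,c\ge 0$ we have $a^2+b^2+c^2 \le (a+b+c)^2 = 1$, so $\nabla_x^2 f \preceq I$, which by the stated equivalence gives exactly $L=1$ smoothness in $x$. Note this step only needs the simplex constraint on $y$; the PSD constraints on $\alpha,\beta,\gamma$ play no role here.

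For part (ii), marginal convexity, I must show $f$ is convex in $x$ for each fixed feasible $y$ and convex in $y$ for each fixed feasible $x$. Convexity in $x$ is immediate from the computation above, since $(yy^\top)\otimes I \succeq 0$ always. For convexity in $y$, I would fix $x$ and treat $f$ as a function of $(a,b,c)$; its Hessian is the Gram matrix $G$ with entries $G_{11}=\|\alpha\|^2$, $G_{12}=\langle \alpha,\beta\rangle$, and so on, i.e. $G \succeq 0$ automatically as any Gram matrix. Thus both marginal Hessians are PSD and Lemma \ref{convex_twice} yields marginal convexity. This is where Lemma \ref{inner} becomes relevant if the authors want the off-diagonal inner products to be nonnegative, though strictly speaking convexity needs only that $G$ is a Gram matrix, not sign information on its entries.

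The routine calculations are the two Hessian evaluations; the only genuine subtlety I anticipate is bookkeeping the Kronecker/vectorization conventions so that the $x$-Hessian is correctly identified as $(yy^\top)\otimes I_{n^2}$ rather than its transpose-in-blocks variant, and confirming the eigenvalue bound $\|y\|^2\le 1$ cleanly uses $a+b+c=1$. The main obstacle, such as it is, is therefore notational rather than conceptual: ensuring the vectorized gradients and Hessians respect the block ordering chosen for $x$, so that the spectral bound $\lambda_{\max}((yy^\top)\otimes I)=\|y\|^2$ is applied to the correct operator. I do not expect any difficulty from the PSD feasible sets, since convexity and smoothness are being asserted for the smooth part $f$ alone, with the cone constraints handled separately by the indicator functions in the model.
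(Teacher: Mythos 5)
Your proposal is correct, and part (i) coincides with the paper's argument: both compute the marginal Hessian $\nabla_x^2 f=(yy^\top)\otimes I$ (the paper writes it as the block matrix $M$ with blocks $a^2I, abI,\dots$) and bound its largest eigenvalue by $a^2+b^2+c^2\leqslant (a+b+c)^2=1$ on the simplex. For part (ii), however, you take a genuinely different and substantially shorter route on the $y$-marginal. The paper proves positive semidefiniteness of the matrix $N$ of Frobenius inner products by a long chain: it first invokes Lemma \ref{inner} (nonnegativity of $\langle A,B\rangle$ for PSD matrices) together with Cauchy--Schwarz, then writes down the cubic characteristic polynomial of $N$, uses the Vieta/Cardano relations among its roots, and establishes the sign of the constant term $p_4$ via a concavity/KKT argument over box constraints, all while explicitly excluding the degenerate case $\|\alpha\|\,\|\beta\|\,\|\gamma\|=0$. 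Your observation that $N$ is a Gram matrix --- $c^\top N c=\left\|c_1\alpha+c_2\beta+c_3\gamma\right\|_F^2\geqslant 0$ for every $c\in\mathbb{R}^3$ --- replaces all of this in one line, needs neither Lemma \ref{inner} nor the PSD constraints on $\alpha,\beta,\gamma$, and covers the degenerate cases the paper sets aside. What the paper's computation buys beyond the lemma is only side information (nonnegativity of the off-diagonal entries of $N$), which the statement does not require; your argument is both more elementary and strictly more general, since it shows marginal convexity in $y$ for arbitrary $\alpha,\beta,\gamma$, with the constraints entering only through the indicator functions exactly as you say.
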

\begin{proof}
	The Hessian of $f\left( x,y\right) $. has block structure:
	\begin{align*}
		\nabla^2_{\left( x,y\right) } f\left( x,y\right)  = \begin{pmatrix} M & 0 \\ 0 & N \end{pmatrix},
	\end{align*}
	where
	$$M=\begin{pmatrix}
		a^2I&abI&acI\\
		baI&b^2I&bcI\\
		caI&cbI&c^2I\\
	\end{pmatrix}, 
	N=\begin{pmatrix}
		\left\|\alpha\right\|^2&\left\langle \alpha,\beta\right\rangle& \left\langle \alpha,\gamma\right\rangle\\
		\left\langle\alpha,\beta\right\rangle&\left\|\beta\right\|^2& \left\langle \beta,\gamma\right\rangle\\
		\left\langle\gamma,\alpha\right\rangle&\left\langle \gamma,\beta\right\rangle& \left\|\gamma\right\|\\
	\end{pmatrix}.$$
	
	\textit{Smoothness:}  The minimum eigenvalue of $M$ is 0, and the maximum eigenvalue of $M$ is $\left( a^2+b^2+c^2\right) \leqslant \left( a+b+c\right) ^2=1$ for which $a\ge0,b\ge0,c\ge0$. It indicates that $f$ is $L$-smooth with respect to variable $x$ and $L\leqslant1$ for any fixed $y$. \\
	
	\textit{Convexity:} For matrix $N$, the innerproduct terms satisfy $\left\langle \alpha,\beta\right\rangle\ge0$, $\left\langle \alpha,\gamma\right\rangle\ge0$, $\left\langle \gamma,\beta\right\rangle\ge0$ with the support of Lemma \ref{inner} and $\left\langle \alpha,\beta\right\rangle^2\leqslant\left\|\alpha\right\|_F^2\left\|\beta\right\|_F^2 $ is true according to Cauchy-Schwarz inequality. 
	
	Now the eigenvalues of matrix $N$ are calculated below. Here we do not consider the degenerate situation which means $d:=\left\|\alpha\right\|^2>0, e:=\left\|\beta\right\|^2>0, f:=\left\|\gamma\right\|^2>0$.
	$$N=\begin{pmatrix}
		d^2&de-\epsilon_1&df-\epsilon_2\\
		ed-\epsilon_1&e^2&ef-\epsilon_3\\
		fd-\epsilon_2&fe-\epsilon_3&f^2\\
	\end{pmatrix},\quad \forall \epsilon_1\in[0,de], \epsilon_2\in[0,df], \epsilon_3\in[0,ef].$$
	The characteristic equation is as follows:
	\begin{align*}
		&-\theta^3+\left( d^2+e^2+f^2\right) \theta^2+\left[ \left( de-\epsilon_1\right) ^2+\left( df-\epsilon_2\right) ^2+\left( ef-\epsilon_3\right) ^2-\left( de\right) ^2-\left( df\right) ^2-\left( ef\right) ^2\right]\theta\\
		&+\left[ d^2e^2f^2+2\left( de-\epsilon_1\right) \left( df-\epsilon_2\right) \left( fe-\epsilon_3\right)-f^2\left( de-\epsilon_1\right) ^2
		-e^2\left( df-\epsilon_2\right) ^2-d^2\left( fe-\epsilon_3\right) ^2\right].
	\end{align*}
	The relationships of the roots can be obtained by Cardano's formula, which are
	\begin{align*}
		\theta_1+\theta_2+\theta_3=-\frac{p_2}{p_1},\\
		\frac{1}{\theta_1}+\frac{1}{\theta_2}+\frac{1}{\theta_3}=-\frac{p_3}{p_4},\\
		\theta_1\theta_2\theta_3=-\frac{p_4}{p_1},
	\end{align*}
	where
	\begin{align*}
		&p_1=-1,\\
		&p_2=d^2+e^2+f^2,\\
		&p_3=\left( de-\epsilon_1\right) ^2+\left( df-\epsilon_2\right) ^2+\left( ef-\epsilon_3\right) ^2-\left( de\right) ^2-\left( df\right) ^2-\left( ef\right) ^2,\\
		&p_4=d^2e^2f^2+2\left( de-\epsilon_1\right) \left( df-\epsilon_2\right) \left( fe-\epsilon_3\right) -f^2\left( de-\epsilon_1\right) ^2-e^2\left( df-\epsilon_2\right) ^2-d^2\left( fe-\epsilon_3\right) ^2.
	\end{align*}
	It can be obtained that $p_3\leqslant0$.
	
	Then, we prove that $p_4\ge0$. $p_4$ can be seen as the function respect to variables $\epsilon_1, \epsilon_2, \epsilon_3$. The Hessian matrix of $p_4$ is negative, which means the function $p_4$ is strongly concave. The gradient of $p_4$ is
	\begin{align*}
		\frac{\partial p_4}{\partial \epsilon_1}=-2\left( df-\epsilon_2\right) \left( ef-\epsilon_3\right) +2f^2\left( de-\epsilon_1\right) .
	\end{align*}
	When $\epsilon_1=de, \epsilon_2=df, \epsilon_3=ef$, $\frac{\partial p_4}{\partial\epsilon}=0$, it is the minimum point of $p_4$. With the Mangasarian-Fromovitz (M-F)  conditions of objective function, the minimum point must be KKT point. The KKT point can be calculated as $\left( 0,0,0\right) $ and $\left( de,df,ef\right) $ within the box constraints. Therefore, the minimal value is 0, which means $p_4\ge0$. Finally, $\theta_1, \theta_2, \theta_3\ge0$ can be deducted by $\theta_1+\theta_2+\theta_3=p_2\ge0$; $\theta_1\theta_2\theta_3=p_4\ge0$; $\frac{1}{\theta_1}+\frac{1}{\theta_2}+\frac{1}{\theta_3}\ge0$.
\end{proof}

\section{Convergence Analysis}\label{convergence}
The optimization problem under consideration is nonconvex and nonseparable with linear equality and set constraints, making it challenging to guarantee global optimality from arbitrary initializations. We establish that our LPADMM algorithm converges to a stationary point with iteration complexity by analyzing first-order optimality conditions.

\begin{theorem}[Optimality conditions]\label{thm:opt_cond} A tuple $\left( y^*,x^*,p^*,z^*,\lambda^*\right) $ constitutes a stationary point of problem (\ref{proapp}) if there exists a Lagrange multiplier $\lambda^*$ satisfying:
	\begin{align}
		\begin{cases}
			0\in\nabla_{y}f\left( x^*,y^*\right) +\partial\delta_{\mathcal{Y}}\left( y^*\right) ,\\
			0\in\nabla_{x}f\left( x^*,y^*\right) +\partial\delta_{\mathcal{X}}\left( x^*\right) +A^\top\lambda^*,\\
			0\in\partial\delta\left( p^*\right) +\xi\left( p^*-z^*\right) ,\\
			\xi\left( z^*-p^*\right) -\lambda^*=0,\\
			Ax^*-z^*=0,
		\end{cases}
	\end{align}
then $\left( y^*,x^*,p^*,z^*,\lambda^*\right) $ is a stationary point of problem (\ref{proapp}).
\end{theorem}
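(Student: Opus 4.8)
The plan is to obtain the displayed system directly as the first-order stationarity conditions of the Lagrangian attached to the linearly constrained problem (\ref{proapp}). First I would dualize only the equality constraint $Ax-z=0$ with multiplier $\lambda$ and write
\begin{align*}
\mathcal{L}(x,y,z,p,\lambda)=f(x,y)+\delta_{\mathcal{X}}(x)+\delta_{\mathcal{Y}}(y)+\delta_{\mathcal{Z}}(p)+\frac{\xi}{2}\left\|p-z\right\|^2+\left\langle\lambda,Ax-z\right\rangle.
\end{align*}
A tuple is declared stationary exactly when $0$ lies in the subdifferential of $\mathcal{L}$ along each primal block and primal feasibility $Ax-z=0$ holds, so the task reduces to computing these blockwise subdifferentials.

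Next I would evaluate each block separately. The objective decomposes into a continuously differentiable part (the term $f$, differentiable by Lemma \ref{convexsmooth}, together with the quadratic penalty) and the indicator terms $\delta_{\mathcal{X}},\delta_{\mathcal{Y}},\delta_{\mathcal{Z}}$ whose underlying sets $\mathcal{X},\mathcal{Y},\mathcal{Z}$ are convex. Using that the subdifferential of a smooth function plus a convex indicator is the gradient of the smooth piece plus the convex subdifferential of the indicator, I obtain: for $y$, $\nabla_yf+\partial\delta_{\mathcal{Y}}$; for $x$, $\nabla_xf+A^\top\lambda+\partial\delta_{\mathcal{X}}$, where $A^\top\lambda$ comes from $\nabla_x\left\langle\lambda,Ax\right\rangle$; for $p$, $\xi(p-z)+\partial\delta_{\mathcal{Z}}$; and for $z$, which carries no indicator, the exact gradient $\xi(z-p)-\lambda$. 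Requiring $0$ to belong to each of these and adjoining $Ax-z=0$ recovers the five listed conditions verbatim.

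The only point needing care is the validity of the subdifferential sum rule across the nonsmooth indicator terms. I would justify it by observing that the smooth part is everywhere differentiable, so its subdifferential is a singleton, and each indicator is a proper closed convex function, which is precisely the setting where the sum rule holds with equality and no additional constraint qualification is required (the subdifferential of each $\delta$ being the normal cone to its convex set). I expect no genuine obstacle in the calculation itself. The subtlety worth emphasizing is interpretive rather than technical: since $f$ is only marginally convex (Lemma \ref{convexsmooth}) and the problem is nonconvex, the system furnishes necessary conditions characterizing stationary points and is compatible with local optimality, but it does not by itself certify global optimality of (\ref{proapp}).
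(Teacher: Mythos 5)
Your proposal is correct and follows essentially the same route as the paper: the paper's own proof is a one-line appeal to subdifferential calculus rules and the definition of stationary points for the linearly constrained problem (\ref{proapp}), which is exactly the computation you carry out explicitly (Lagrangian with multiplier $\lambda$ on $Ax-z=0$, blockwise sum rule for smooth part plus convex indicator, primal feasibility). Your worked-out blockwise derivation and the remark on the sum rule's validity simply fill in details the paper leaves implicit, and all five conditions match verbatim.
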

\begin{proof}
The result follows directly from subdifferential calculus rules \cite{nesterov2018lectures} and the definition of stationary points in constrained optimization.
\end{proof}
We establish the descent properties of the augmented Lagrangian $\mathcal{L}_{\eta}$ through successive variable updates.
\begin{lemma}[$y$-update descent]\label{lem:y_descent}
	Let $\left\lbrace\left( y^{k},x^k,p^k,z^k,\lambda^k\right)  \right\rbrace$ be the sequence generated by LPADMM, then 
	\begin{align}\label{yde}
		\mathcal{L}_{\eta}\left( y^{k+1},x^k,p^k,z^k,\lambda^k\right) -\mathcal{L}_{\eta}\left( y^{k},x^k,p^k,z^k,\lambda^k\right) \leqslant-\mu_1\left\|y^{k+1}-y^k\right\|^2.
	\end{align}
\end{lemma}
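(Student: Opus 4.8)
The plan is to isolate the $y$-dependence of the augmented Lagrangian and then exploit the marginal convexity of $f$ established in Lemma \ref{convexsmooth}. First I would observe that, when $x^k, p^k, z^k, \lambda^k$ are all held fixed, every term in $\mathcal{L}_{\eta}$ defined in (\ref{aulag}) except $f\left( x^k, y\right) +\delta_{\mathcal{Y}}\left( y\right) $ is independent of $y$: the inner product $\left\langle \lambda^k, Ax^k - z^k\right\rangle$, the penalty $\frac{\eta}{2}\left\|Ax^k - z^k\right\|^2$, the indicators $\delta_{\mathcal{X}}\left( x^k\right) $ and $\delta_{\mathcal{Z}}\left( p^k\right) $, and the term $\frac{\xi}{2}\left\|p^k - z^k\right\|^2$ all collapse into a single constant $C$. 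Hence $\mathcal{L}_{\eta}\left( y, x^k, p^k, z^k, \lambda^k\right) = g\left( y\right) + C$ with $g\left( y\right) := f\left( x^k, y\right) + \delta_{\mathcal{Y}}\left( y\right) $, so the claimed descent for $\mathcal{L}_{\eta}$ is equivalent to the same descent for $g$.

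Next I would record that $g$ is convex: $\delta_{\mathcal{Y}}$ is the indicator of a convex set, and $f\left( x^k, \cdot\right) $ is convex by the marginal convexity part of Lemma \ref{convexsmooth} (the block $N$ of the Hessian is positive semidefinite). Consequently the $y$-subproblem objective in (\ref{eq:case1}), namely $G\left( y\right) := g\left( y\right) + \frac{\mu_1}{2}\left\|y - y^k\right\|^2$, is $\mu_1$-strongly convex, and $y^{k+1}$ is its unique minimizer, so that $0 \in \partial G\left( y^{k+1}\right) $.

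The decisive step is to apply the strong-convexity inequality at the minimizer. Since $G$ is $\mu_1$-strongly convex and $0 \in \partial G\left( y^{k+1}\right) $, for every $y$ one has $G\left( y\right) \ge G\left( y^{k+1}\right) + \frac{\mu_1}{2}\left\|y - y^{k+1}\right\|^2$; evaluating at $y = y^k$ gives $G\left( y^k\right) \ge G\left( y^{k+1}\right) + \frac{\mu_1}{2}\left\|y^k - y^{k+1}\right\|^2$. Substituting $G\left( y^k\right) = g\left( y^k\right) $ and $G\left( y^{k+1}\right) = g\left( y^{k+1}\right) + \frac{\mu_1}{2}\left\|y^{k+1} - y^k\right\|^2$ and rearranging yields $g\left( y^{k+1}\right) - g\left( y^k\right) \le -\mu_1\left\|y^{k+1} - y^k\right\|^2$, which is precisely (\ref{yde}) once the constant $C$ is reinstated.

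I expect the main subtlety to be obtaining the full coefficient $\mu_1$ rather than $\mu_1/2$. A naive optimality comparison, merely using that $y^{k+1}$ attains a value in (\ref{eq:case1}) no larger than $y^k$, yields only $-\frac{\mu_1}{2}\left\|y^{k+1} - y^k\right\|^2$. The extra factor of two arises exactly from invoking the convexity of $g$ (equivalently, the strong convexity of $G$) rather than bare minimality, which is why the marginal convexity supplied by Lemma \ref{convexsmooth} is indispensable at this step.
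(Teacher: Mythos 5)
Your proposal is correct and follows essentially the same route as the paper: both arguments reduce the claim to the $y$-dependent part $f\left(x^k,\cdot\right)+\delta_{\mathcal{Y}}\left(\cdot\right)$ of $\mathcal{L}_{\eta}$ and combine its convexity (Lemma \ref{convexsmooth}) with the first-order optimality of the proximal subproblem (\ref{eq:case1}) to obtain the full coefficient $\mu_1$ rather than $\mu_1/2$. The only difference is cosmetic: the paper manipulates the subgradient inequalities for $f$ and $\delta_{\mathcal{Y}}$ explicitly, whereas you package the identical content as the quadratic-growth inequality of the $\mu_1$-strongly convex subproblem objective at its minimizer.
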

\begin{proof}
	Combining (\ref{eq:case1}) with the convexity of $f\left( x^k,y\right) $ and $\delta_{\mathcal{Y}}\left( y\right) $, we have
	\begin{align*}
		\mathcal{L}_{\eta}\left( y^{k+1},x^k,p^k,z^k,\lambda^k\right) -\mathcal{L}_{\eta}\left( y^{k},x^k,p^k,z^k,\lambda^k\right) \leqslant\partial^\top\delta_{\mathcal{Y}}\left( y^{k+1}\right) \left( y^{k+1}-y^{k}\right) \leqslant-\mu_1\left\|y^{k+1}-y^k\right\|^2.
	\end{align*}
\end{proof}
\begin{lemma}[Joint $x-p-z$ update descent]\label{lem:xpz_descent}
	Let $\left\lbrace \left( y^{k},x^k,p^k,z^k,\lambda^k\right) \right\rbrace$ be the sequence generated by LPADMM, then 
	\begin{align}\label{xpzde}
		&\mathcal{L}_{\eta}\left( y^{k+1},x^{k+1},p^{k+1},z^{k+1},\lambda^k\right) -\mathcal{L}_{\eta}\left( y^{k+1},x^k,p^k,z^k,\lambda^k\right) \notag\\
		\leqslant&-\frac{\eta}{2}\left\|z^{k+1}-z^k\right\|^2-\frac{\eta}{2}\left\|x^{k+1}-x^k\right\|^2_{A^\top A}-\left( \mu_2-L\right) \left\|x^{k+1}-x^k\right\|^2-\mu_3\left\|p^{k+1}-p^k\right\|^2.
	\end{align}
\end{lemma}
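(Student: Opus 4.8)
The plan is to telescope the joint increment through the intermediate iterates obtained by updating $x$, then $p$, then $z$ (with $y^{k+1}$ and $\lambda^k$ frozen), so that the left-hand side of (\ref{xpzde}) splits as $\Delta_x+\Delta_p+\Delta_z$, where these are the three successive changes of $\mathcal{L}_\eta$. Each reduces to the objective gap of the corresponding subproblem (\ref{eq:case2})--(\ref{eq:case4}) after cancelling the frozen terms, so the first-order optimality of each minimizer, together with convexity of the indicators, drives the bound. I would treat the three blocks separately and add at the end.

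The $x$-step is where the real work lies, and I expect it to be the main obstacle, because (\ref{eq:case2}) minimizes a \emph{linearization} of $f$ whereas $\mathcal{L}_\eta$ carries the full $f(\cdot,y^{k+1})$. To reconcile the two I would invoke Lemma \ref{convexsmooth}: $f$ is $L$-smooth in $x$ (with $L\le 1$), hence $f(x^{k+1},y^{k+1})-f(x^k,y^{k+1})\le \langle \nabla_x f(x^k,y^{k+1}),x^{k+1}-x^k\rangle+\tfrac{L}{2}\|x^{k+1}-x^k\|^2$. The optimality condition of (\ref{eq:case2}) yields a subgradient $\zeta\in\partial\delta_{\mathcal X}(x^{k+1})$ with $\zeta+\nabla_x f(x^k,y^{k+1})+A^\top\lambda^k+\eta A^\top(Ax^{k+1}-z^k)+\mu_2(x^{k+1}-x^k)=0$, and convexity of $\delta_{\mathcal X}$ converts $\delta_{\mathcal X}(x^{k+1})-\delta_{\mathcal X}(x^k)$ into $\langle\zeta,x^{k+1}-x^k\rangle$. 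Substituting both facts makes the linearized-gradient terms and the multiplier terms $\langle\lambda^k,A(x^{k+1}-x^k)\rangle$ cancel, leaving a $-\mu_2\|x^{k+1}-x^k\|^2$ from the proximal term plus a cross-and-penalty block. The crux is the completing-the-square identity
\[
-\eta\langle Ax^{k+1}-z^k,\,A(x^{k+1}-x^k)\rangle+\tfrac{\eta}{2}\|Ax^{k+1}-z^k\|^2-\tfrac{\eta}{2}\|Ax^{k}-z^k\|^2=-\tfrac{\eta}{2}\|x^{k+1}-x^k\|^2_{A^\top A},
\]
which produces the $x$-increment bound $-\tfrac{\eta}{2}\|x^{k+1}-x^k\|^2_{A^\top A}-(\mu_2-\tfrac{L}{2})\|x^{k+1}-x^k\|^2$; since $\tfrac{L}{2}\le L$, its coefficient is at least as negative as the claimed $-(\mu_2-L)$, so the stated term follows.

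The $p$- and $z$-steps are more routine, as no linearization appears. For $p$, the objective of (\ref{eq:case3}) is exactly the $p$-dependent part of $\mathcal{L}_\eta$ augmented by the proximal term; using an optimality subgradient of $\delta_{\mathcal Z}$ and the same square-completion (now with $\xi$ and $z^k$ in place of $\eta$ and $Ax^{k+1}$) bounds $\Delta_p$ by $-(\tfrac{\xi}{2}+\mu_3)\|p^{k+1}-p^k\|^2\le -\mu_3\|p^{k+1}-p^k\|^2$. For $z$, the subproblem (\ref{eq:case4}) coincides with minimizing the entire $z$-dependent part of $\mathcal{L}_\eta$, an unconstrained quadratic with Hessian $(\eta+\xi)I$; hence $z^{k+1}$ is its global minimizer and strong convexity gives $\Delta_z\le -\tfrac{\eta+\xi}{2}\|z^{k+1}-z^k\|^2\le -\tfrac{\eta}{2}\|z^{k+1}-z^k\|^2$.

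Finally I would add $\Delta_x+\Delta_p+\Delta_z$ and discard the surplus nonpositive terms (the $\tfrac{L}{2}$ slack in the $x$-coefficient and the two $-\tfrac{\xi}{2}\|\cdot\|^2$ pieces from the $p$- and $z$-steps), which reproduces (\ref{xpzde}). The only genuinely delicate point is thus the $x$-step: matching the linearized subproblem to the true $f$ through the smoothness bound and regrouping the penalty cross-term exactly into the $A^\top A$-seminorm; the $p$- and $z$-steps follow the standard strongly-convex-minimizer pattern.
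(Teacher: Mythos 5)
Your proposal is correct and proves exactly the stated inequality, but it organizes the algebra differently from the paper. Both proofs share the same skeleton: telescope the joint increment through the intermediate points obtained by updating $x$, then $p$, then $z$, and control each piece via the subproblem optimality conditions, convexity of the indicators, and the $L$-smoothness of $f$ in $x$ (to bridge the linearized subproblem \eqref{eq:case2} and the true $f$ in $\mathcal{L}_\eta$). The difference is where the squares get completed. The paper's per-block estimates are \emph{not} individually sign-definite: in its $x$-step it rewrites $\eta A^\top\left( Ax^{k+1}-z^k\right)$ using the not-yet-performed dual update $\lambda^{k+1}=\lambda^k+\eta\left( Ax^{k+1}-z^{k+1}\right)$, so its inequalities \eqref{ineq1} and \eqref{ineq3} carry indefinite cross terms such as $-\left\langle \lambda^{k+1}-\lambda^k,A\left( x^{k+1}-x^k\right) \right\rangle$, $\left\langle \lambda^{k+1}-\lambda^k,z^{k+1}-z^k\right\rangle$ and $-\eta\left\langle z^{k+1}-z^k,A\left( x^{k+1}-x^k\right) \right\rangle$, which only cancel through a joint algebraic identity when \eqref{ineq1}, \eqref{ineq2}, \eqref{ineq3} are summed at the very end. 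You instead complete the square locally inside each block (your displayed identity for the $x$-step and its $\xi$-analogue for the $p$-step) and treat the $z$-step as exact minimization of an $\left( \eta+\xi\right)$-strongly convex quadratic, so each of $\Delta_x$, $\Delta_p$, $\Delta_z$ is separately bounded by a nonpositive quantity and the final summation is trivial; no reference to $\lambda^{k+1}$ or to iterates of other blocks is needed. Your route is more modular and in fact yields strictly tighter constants, $-\left( \mu_2-\tfrac{L}{2}\right)$, $-\left( \mu_3+\tfrac{\xi}{2}\right)$ and $-\tfrac{\eta+\xi}{2}$, which you then correctly relax to the stated $-\left( \mu_2-L\right)$, $-\mu_3$ and $-\tfrac{\eta}{2}$; retained, they would even mildly weaken the parameter requirements of Proposition \ref{prop:sufficient_descent} (e.g.\ $\mu_2>L/2$ would suffice). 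What the paper's bookkeeping buys in return is a set of intermediate inequalities whose cross-term structure matches the standard nonconvex ADMM analyses it cites, but as a self-contained proof of this lemma, your version is the more economical one.
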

\begin{proof}
	With (\ref{eq:case2}), the following inequalities can be obtained.
	\begin{align}
		&\mathcal{L}_{\eta}\left( y^{k+1},x^{k+1},p^k,z^k,\lambda^k\right) -\mathcal{L}_{\eta}\left( y^{k+1},x^k,p^k,z^k,\lambda^k\right) \notag\\
		=\:&\delta_{\mathcal{X}}\left( x^{k+1}\right) -\delta_{\mathcal{X}}\left( x^k\right) +f\left( x^{k+1},y^{k+1}\right) -f\left( x^k,y^{k+1}\right) +\left\langle \lambda^k,Ax^{k+1}-Ax^k\right\rangle\notag\\ &+\frac{\eta}{2}\left\|Ax^{k+1}-z^{k}\right\|^2-\frac{\eta}{2}\left\|Ax^{k}-z^{k}\right\|^2\notag\\
		\leqslant\:&\partial^\top\delta_{\mathcal{X}}\left( x^{k+1}\right) \left( x^{k+1}-x^{k}\right) +f\left( x^{k+1},y^{k+1}\right) -f\left( x^k,y^{k+1}\right) +\left\langle \lambda^k,Ax^{k+1}-Ax^k\right\rangle\notag\\
		&+\frac{\eta}{2}\left\|Ax^{k+1}-z^{k}\right\|^2-\frac{\eta}{2}\left\|Ax^{k}-z^{k}\right\|^2\notag\\
		=&\left\langle \nabla_x^\top f\left( x^k,y^{k+1}\right) -A^\top\lambda^k-\eta A^\top\left( Ax^{k+1}-z^k\right) -\mu_2\left( x^{k+1}-x^k\right) ,x^{k+1}-x^k\right\rangle\notag\\
		&+\left\langle\lambda^k,Ax^{k+1}-Ax^k\right\rangle+\frac{\eta}{2}\left\|Ax^{k+1}-z^{k}\right\|^2-\frac{\eta}{2}\left\|Ax^{k}-z^{k}\right\|^2+f\left( x^{k+1},y^{k+1}\right) -f\left( x^k,y^{k+1}\right) \notag\\
		=&\left\langle -A^\top\lambda^k-\eta^k A^\top\left( Ax^{k+1}-z^{k+1}-z^k+z^{k+1}\right) -\mu_2\left( x^{k+1}-x^k\right) ,x^{k+1}-x^k\right\rangle\notag\\
		&+\left\langle\lambda^k,Ax^{k+1}-Ax^k\right\rangle+\frac{\eta}{2}\left\|Ax^{k+1}-z^{k}\right\|^2-\frac{\eta}{2}\left\|Ax^{k}-z^{k}\right\|^2+f\left( x^{k+1},y^{k+1}\right) -f\left( x^k,y^{k+1}\right) \notag\\
		&-\left\langle \nabla_x^\top f\left( x^{k+1},y^{k+1}\right) ,x^{k+1}-x^k\right\rangle+\left\langle \nabla_x^\top f\left( x^{k+1},y^{k+1}\right) -\nabla_x^\top f\left( x^{k},y^{k+1}\right) ,x^{k+1}-x^k\right\rangle\notag\\
		\leqslant&-\left\langle\lambda^{k+1}-\lambda^k,Ax^{k+1}-Ax^k\right\rangle+\frac{\eta}{2}\left\|Ax^{k+1}-z^{k}\right\|^2-\frac{\eta}{2}\left\|Ax^{k}-z^{k}\right\|^2\notag\\
		&-\eta\left\langle z^{k+1}-z^k,A^{x+1}-Ax^k\right\rangle-\left( \mu_2-L\right) \left\|x^{k+1}-x^k\right\|^2.\label{ineq1}
	\end{align}
	The first inequality is obtained by the convexity of $\delta_{\mathcal{X}}\left( x\right) $, and the last inequality is obtained by the 1-smoothness of $f\left( x,y\right) $ with respect to variable $x$ for any fixed $y$.
	
	Here, we set $g\left( p,z\right) =\frac{\xi}{2}\left\|p-z\right\|^2$. By (\ref{eq:case3}), we have
	\begin{align}
		&\mathcal{L}_{\eta}\left( y^{k+1},x^{k+1},p^{k+1},z^k,\lambda^k\right) -\mathcal{L}_{\eta}\left( y^{k+1},x^{k+1},p^k,z^k,\lambda^k\right) \notag\\
		=\:&\delta_{\mathcal{Z}}\left( p^{k+1}\right) -\delta_{\mathcal{Z}}\left( p^k\right) +g\left( p^{k+1},z^k\right) -g\left( p^k,z^k\right) \notag\\
		\leqslant\:&[\partial\delta_{\mathcal{Z}}\left( p^{k+1}\right) +\nabla_{p}g\left( p^{k+1},z^k\right) ]^\top\left( p^{k+1}-p^k\right) -\mu_3\left\|p^{k+1}-p^k\right\|^2.\label{ineq2}
	\end{align}
	The update of $z$ (\ref{eq:case4}) yields
	\begin{align}
		&\mathcal{L}_{\eta}\left( y^{k+1},x^{k+1},p^{k+1},z^{k+1},\lambda^k\right) -\mathcal{L}_{\eta}\left( y^{k+1},x^{k+1},p^{k+1},z^k,\lambda^k\right) \notag\\
		=\:&g\left( p^{k+1},z^{k+1}\right) -\left\langle \lambda^k,z^{k+1}\right\rangle+\frac{\eta}{2}\left\|Ax^{k+1}-z^{k+1}\right\|^2-g\left( p^{k+1},z^k\right) +\left\langle \lambda^k,z^k\right\rangle-\frac{\eta}{2}\left\|Ax^{k+1}-z^{k}\right\|^2\notag\\
		\leqslant\:&\nabla_{z}g\left( p^{k+1},z^{k+1}\right) ^\top\left( z^{k+1}-z^k\right) -\left\langle \lambda^k,z^{k+1}-z^k\right\rangle+\frac{\eta}{2}\left\|Ax^{k+1}-z^{k+1}\right\|^2-\frac{\eta}{2}\left\|Ax^{k+1}-z^{k}\right\|^2\notag\\
		=&\left\langle \lambda^k-\eta \left( -Ax^{k+1}+z^{k+1}\right) ,z^{k+1}-z^k\right\rangle-\left\langle \lambda^k,z^{k+1}-z^k\right\rangle\notag\\
		&+\frac{\eta}{2}\left\|Ax^{k+1}-z^{k+1}\right\|^2-\frac{\eta}{2}\left\|Ax^{k+1}-z^{k}\right\|^2\notag\\
		=&\left\langle \lambda^{k+1}-\lambda^k,z^{k+1}-z^k\right\rangle+\frac{\eta}{2}\left\|Ax^{k+1}-z^{k+1}\right\|^2-\frac{\eta}{2}\left\|Ax^{k+1}-z^{k}\right\|^2.\label{ineq3}
	\end{align}
	Since $g\left( p,z\right) $ is convex, (\ref{lem:xpz_descent}) can be derived by adding (\ref{ineq1}), (\ref{ineq2}) and (\ref{ineq3}).
\end{proof}

\begin{lemma}[Dual variable bound]\label{lem:dual_bound}
	Let $\left\lbrace \left( y^{k},x^k,p^k,z^k,\lambda^k\right) \right\rbrace$ be the sequence generated by LPADMM, then 
	\begin{align}\label{lamde}
		&\mathcal{L}_{\eta}\left( y^{k+1},x^{k+1},p^{k+1},z^{k+1},\lambda^{k+1}\right) -\mathcal{L}_{\eta}\left( y^{k+1},x^{k+1},p^{k+1},z^{k+1},\lambda^k\right) \notag\\
		\leqslant&\frac{2\xi^2}{\eta}\left( \left\|p^{k+1}-p^k\right\|^2+\left\|z^{k+1}-z^k\right\|^2\right) .
	\end{align}
\end{lemma}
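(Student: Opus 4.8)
The plan is to exploit the fact that in the augmented Lagrangian (\ref{aulag}) the dual variable $\lambda$ enters only through the linear coupling term $\langle \lambda, Ax-z\rangle$. Holding the primal block $(y^{k+1},x^{k+1},p^{k+1},z^{k+1})$ fixed, the left-hand side of (\ref{lamde}) collapses to $\langle \lambda^{k+1}-\lambda^k,\, Ax^{k+1}-z^{k+1}\rangle$. First I would substitute the dual update (\ref{eq:case5}), which gives $Ax^{k+1}-z^{k+1}=\frac{1}{\eta}(\lambda^{k+1}-\lambda^k)$, so the difference equals $\frac{1}{\eta}\|\lambda^{k+1}-\lambda^k\|^2$. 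The entire lemma then reduces to bounding the dual increment $\|\lambda^{k+1}-\lambda^k\|^2$ by the primal movements $\|p^{k+1}-p^k\|^2$ and $\|z^{k+1}-z^k\|^2$.

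The crux is an explicit formula for the multiplier. The $z$-subproblem (\ref{eq:case4}) is smooth and unconstrained, so its first-order optimality condition reads $-\lambda^k+\eta(z^{k+1}-Ax^{k+1})+\xi(z^{k+1}-p^{k+1})=0$. Combining this with the dual update $\lambda^{k+1}=\lambda^k+\eta(Ax^{k+1}-z^{k+1})$ cancels the $\eta$-terms and yields the clean identity $\lambda^{k+1}=\xi(z^{k+1}-p^{k+1})$, which is precisely the stationarity relation $\xi(z^*-p^*)-\lambda^*=0$ appearing in Theorem \ref{thm:opt_cond}. Since this identity is produced at every iteration, the analogous relation $\lambda^k=\xi(z^k-p^k)$ holds one step earlier.

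Subtracting the two identities gives $\lambda^{k+1}-\lambda^k=\xi\left[(z^{k+1}-z^k)-(p^{k+1}-p^k)\right]$. Applying the elementary inequality $\|u-v\|^2\le 2\|u\|^2+2\|v\|^2$ then produces $\|\lambda^{k+1}-\lambda^k\|^2\le 2\xi^2\left(\|z^{k+1}-z^k\|^2+\|p^{k+1}-p^k\|^2\right)$, and dividing by $\eta$ recovers (\ref{lamde}) exactly. I expect no serious obstacle here: the only step demanding care is the derivation of the multiplier identity $\lambda^{k+1}=\xi(z^{k+1}-p^{k+1})$, where one must track the signs in the gradient of the $z$-subproblem correctly so that the $\eta$-terms telescope against the dual update; once that identity is in hand, the remaining estimates are routine.
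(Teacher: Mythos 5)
Your proof is correct and follows essentially the same route as the paper's: reduce the left-hand side to $\frac{1}{\eta}\left\|\lambda^{k+1}-\lambda^k\right\|^2$ via the dual update (\ref{eq:case5}), extract the multiplier identity from the optimality condition of the $z$-subproblem (\ref{eq:case4}), and bound the dual increment by the primal movements with the elementary inequality $\left\|u-v\right\|^2\leqslant 2\left\|u\right\|^2+2\left\|v\right\|^2$. In fact your sign convention $\lambda^{k+1}=\xi\left(z^{k+1}-p^{k+1}\right)$ is the correct one (consistent with Theorem \ref{thm:opt_cond}), whereas the paper writes $\lambda^{k+1}=\xi\left(p^{k+1}-z^{k+1}\right)$, a harmless sign slip that does not affect the norm estimate.
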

\begin{proof}
	From (\ref{eq:case5}), we have
	\begin{align}\label{lamequ}
		&\mathcal{L}_{\eta}\left( y^{k+1},x^{k+1},p^{k+1},z^{k+1},\lambda^{k+1}\right) -\mathcal{L}_{\eta}\left( y^{k+1},x^{k+1},p^{k+1},z^{k+1},\lambda^k\right) =\frac{1}{\eta}\left\| \lambda^{k+1}-\lambda^k\right\|^2.
	\end{align}
	Then, we estimate the positive term $\frac{1}{\eta}\left\| \lambda^{k+1}-\lambda^k\right\|^2$. From (\ref{eq:case4}), the following equality can be obtained.
	\begin{align*}
		\lambda^{k+1}=\xi\left( p^{k+1}-z^{k+1}\right) .
	\end{align*}
	By triangle inequality for norms, we get
	\begin{align}\label{lambda}
		\left\|\lambda^{k+1}-\lambda^k\right\|^2\leqslant2\xi^2\left( \left\|p^{k+1}-p^k\right\|^2+\left\|z^{k+1}-z^k\right\|^2\right) .
	\end{align}
	The result can be obtained by substituting (\ref{lambda}) into (\ref{lamequ}).
\end{proof}

\begin{proposition}[Sufficient descent of $\mathcal{L}_{\eta}$]\label{prop:sufficient_descent}
	If $\eta>2\xi$, $\mu_2>L$, $\mu_3>\frac{2\xi^2}{\eta}$, and the sequence $\left\lbrace \left( y^{k},x^k,p^k,z^k,\lambda^k\right) \right\rbrace$ is generated by LPADMM, then $\mathcal{L}_{\eta}$ satisfies:	
	\begin{align}\label{final}
		&\mathcal{L}_{\eta}\left( y^{k+1},x^{k+1},p^{k+1},z^{k+1},\lambda^{k+1}\right) -\mathcal{L}_{\eta}\left( y^{k},x^k,p^k,z^k,\lambda^k\right) \notag\\
		\leqslant&-\left( \frac{\eta}{2}-\frac{2\xi^2}{\eta}\right) \left\|z^{k+1}-z^k\right\|^2-\frac{\eta}{2}\left\|x^{k+1}-x^k\right\|^2_{A^\top A}-\mu_1\left\|y^{k+1}-y^k\right\|^2\notag\\
		&-\left( \mu_2-L\right) \left\|x^{k+1}-x^k\right\|^2-\left( \mu_3-\frac{2\xi^2}{\eta}\right) \left\|p^{k+1}-p^k\right\|^2.
	\end{align}
\end{proposition}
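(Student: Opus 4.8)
The plan is to prove Proposition \ref{prop:sufficient_descent} by telescoping the three descent lemmas already established (Lemma \ref{lem:y_descent}, Lemma \ref{lem:xpz_descent}, and Lemma \ref{lem:dual_bound}) along a single iteration $k\to k+1$. The key observation is that the left-hand side of \eqref{final} decomposes as a telescoping sum: inserting and cancelling the intermediate Lagrangian values, I would write
\begin{align*}
	&\mathcal{L}_{\eta}\left( y^{k+1},x^{k+1},p^{k+1},z^{k+1},\lambda^{k+1}\right) -\mathcal{L}_{\eta}\left( y^{k},x^k,p^k,z^k,\lambda^k\right) \\
	=\:&\left[ \mathcal{L}_{\eta}\left( y^{k+1},x^k,p^k,z^k,\lambda^k\right) -\mathcal{L}_{\eta}\left( y^{k},x^k,p^k,z^k,\lambda^k\right) \right] \\
	&+\left[ \mathcal{L}_{\eta}\left( y^{k+1},x^{k+1},p^{k+1},z^{k+1},\lambda^k\right) -\mathcal{L}_{\eta}\left( y^{k+1},x^k,p^k,z^k,\lambda^k\right) \right] \\
	&+\left[ \mathcal{L}_{\eta}\left( y^{k+1},x^{k+1},p^{k+1},z^{k+1},\lambda^{k+1}\right) -\mathcal{L}_{\eta}\left( y^{k+1},x^{k+1},p^{k+1},z^{k+1},\lambda^k\right) \right] ,
\end{align*}
where the three bracketed quantities are bounded respectively by the right-hand sides of \eqref{yde}, \eqref{xpzde}, and \eqref{lamde}.

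Summing these three upper bounds, I would collect like terms. The $\left\|y^{k+1}-y^k\right\|^2$ term contributes $-\mu_1$ directly from Lemma \ref{lem:y_descent}. The $\left\|x^{k+1}-x^k\right\|^2_{A^\top A}$ and $\left\|x^{k+1}-x^k\right\|^2$ terms come solely from Lemma \ref{lem:xpz_descent}, giving $-\frac{\eta}{2}$ and $-(\mu_2-L)$ coefficients. The crucial combination happens for the $z$ and $p$ difference terms: the $-\frac{\eta}{2}\left\|z^{k+1}-z^k\right\|^2$ from Lemma \ref{lem:xpz_descent} combines with the $+\frac{2\xi^2}{\eta}\left\|z^{k+1}-z^k\right\|^2$ from Lemma \ref{lem:dual_bound} to yield the coefficient $-\left( \frac{\eta}{2}-\frac{2\xi^2}{\eta}\right) $, and similarly the $-\mu_3\left\|p^{k+1}-p^k\right\|^2$ combines with $+\frac{2\xi^2}{\eta}\left\|p^{k+1}-p^k\right\|^2$ to give $-\left( \mu_3-\frac{2\xi^2}{\eta}\right) $. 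This reproduces exactly the right-hand side of \eqref{final}.

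The parameter hypotheses $\eta>2\xi$, $\mu_2>L$, and $\mu_3>\frac{2\xi^2}{\eta}$ are precisely what is needed to guarantee each collected coefficient is strictly negative (note $\eta>2\xi$ is equivalent to $\frac{\eta}{2}>\frac{2\xi^2}{\eta}$), so while they are not required for the inequality \eqref{final} itself, they certify genuine sufficient descent of $\mathcal{L}_{\eta}$. I would state this monotonicity as the conclusion, since it is the quantity of interest for the subsequent convergence argument.

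I do not anticipate a serious obstacle here, as the proposition is essentially a bookkeeping consolidation of the preceding lemmas; the only point requiring care is verifying that the intermediate Lagrangian evaluation points in the three brackets match exactly the arguments appearing on the left-hand sides of \eqref{yde}, \eqref{xpzde}, and \eqref{lamde}, so that the telescoping is valid with no leftover cross terms. In particular I would double-check that Lemma \ref{lem:dual_bound} is applied at the fully updated primal point $\left( y^{k+1},x^{k+1},p^{k+1},z^{k+1}\right) $, which it is, so the three increments chain together cleanly.
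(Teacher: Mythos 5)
Your proposal is correct and matches the paper's own proof, which likewise obtains \eqref{final} by summing the estimates \eqref{yde}, \eqref{xpzde}, and \eqref{lamde} along the same telescoping decomposition of the Lagrangian difference. Your additional remark that the parameter conditions $\eta>2\xi$, $\mu_2>L$, $\mu_3>\frac{2\xi^2}{\eta}$ serve only to make the collected coefficients strictly negative (rather than being needed for the inequality itself) is a correct and slightly more careful reading than the paper's one-line justification.
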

\begin{proof}
	Summing the estimates from (\ref{yde}), (\ref{xpzde}) and (\ref{lamde}) under the specified parameter conditions yields the composite descent inequality.
\end{proof}

\begin{proposition}[Boundness of variables]\label{prop:boundedness}
	Under the conditions of Proposition \ref{prop:sufficient_descent}:
	\begin{itemize}
	\item[(i)]$\mathcal{L}_{\eta}\left( y^{k},x^k,p^k,z^k,\lambda^k\right) $ is lower bounded for all $k\in\mathbb{N}$ and converges as $k\to\infty$.
	\item[(ii)]The sequence $\left\lbrace \left( y^{k},x^k,p^k,z^k,\lambda^k\right) \right\rbrace$ is bounded.
	\end{itemize}
\end{proposition}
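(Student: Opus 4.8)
The plan is to prove both claims by exploiting the structure that emerges when the augmented Lagrangian is evaluated along the generated sequence. The first observation is that every subproblem update in \eqref{eq:case1}--\eqref{eq:case4} forces its variable into the corresponding feasible set, so the indicator terms $\delta_{\mathcal{X}}$, $\delta_{\mathcal{Y}}$, $\delta_{\mathcal{Z}}$ all vanish along the iterates. Hence, for every $k\ge1$,
\[ \mathcal{L}_{\eta}\left(y^k,x^k,p^k,z^k,\lambda^k\right)=f\left(x^k,y^k\right)+\frac{\xi}{2}\left\|p^k-z^k\right\|^2+\left\langle\lambda^k,Ax^k-z^k\right\rangle+\frac{\eta}{2}\left\|Ax^k-z^k\right\|^2. \]

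For part (i) I would complete the square on the last two terms, writing the linear-plus-quadratic block as $\frac{\eta}{2}\left\|Ax^k-z^k+\lambda^k/\eta\right\|^2-\frac{1}{2\eta}\left\|\lambda^k\right\|^2$, and then invoke the dual identity obtained from the $z$-update \eqref{eq:case4}, which gives $\left\|\lambda^k\right\|^2=\xi^2\left\|p^k-z^k\right\|^2$ (the sign convention is immaterial here since only the squared norm enters). The cross-cancellation leaves the coefficient $\frac{\xi}{2}\left(1-\xi/\eta\right)$ on $\left\|p^k-z^k\right\|^2$, which is strictly positive because $\eta>2\xi$ implies $1-\xi/\eta>\tfrac12$. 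Since $f\ge0$, the augmented Lagrangian is thereby exhibited as a sum of nonnegative terms and is bounded below by $0$. Combining this lower bound with the monotone non-increase from Proposition \ref{prop:sufficient_descent} — whose right-hand side is nonpositive under $\eta>2\xi$, $\mu_2>L$, $\mu_3>2\xi^2/\eta$ and $\mu_1>0$ — I conclude that a non-increasing, bounded-below real sequence converges, which settles (i).

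For part (ii), convergence in (i) yields the uniform upper bound $\mathcal{L}_{\eta}\left(y^k,x^k,p^k,z^k,\lambda^k\right)\le\mathcal{L}_{\eta}\left(y^0,x^0,p^0,z^0,\lambda^0\right)$. Applying this to the nonnegative decomposition above bounds each summand separately: $f\left(x^k,y^k\right)$ is bounded, $\left\|p^k-z^k\right\|$ is bounded (whence $\left\|\lambda^k\right\|=\xi\left\|p^k-z^k\right\|$ is bounded), and $\left\|Ax^k-z^k+\lambda^k/\eta\right\|$ is bounded, which together with the bound on $\lambda^k$ produces a bound on $\left\|Ax^k-z^k\right\|$ by the triangle inequality. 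Boundedness of $y^k$ is immediate, as $\mathcal{Y}$ is the compact probability simplex.

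The main obstacle — the step demanding the most care — is bounding $x^k$, since $\mathcal{X}$ imposes only positive semidefiniteness of $\alpha,\beta,\gamma$ and is itself unbounded. I would close this gap through the constraint coupling. The identity-row blocks of $\left\|Ax^k-z^k\right\|$ control $\left\|\alpha^k-L^k_z\right\|$ (writing $L^k_z$, $L^k_p$ for the $L$-components of $z^k$, $p^k$), while the matching block of $\left\|p^k-z^k\right\|$ controls $\left\|L^k_z-L^k_p\right\|$, and $L^k_p$ has unit trace by membership in $\mathcal{Z}$. A triangle inequality on traces then shows $\mathrm{tr}\left(\alpha^k\right)$ equals $1$ plus quantities bounded uniformly in $k$; since $\alpha^k\succeq0$ we have $\left\|\alpha^k\right\|\le\mathrm{tr}\left(\alpha^k\right)$, so $\alpha^k$ is bounded, and analogously $\beta^k,\gamma^k$. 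Boundedness of $x^k$ then propagates through the linear map $A$ to $Ax^k$, to $z^k$ via the bound on $\left\|Ax^k-z^k\right\|$, and finally to $p^k$ via the bound on $\left\|p^k-z^k\right\|$, completing (ii).
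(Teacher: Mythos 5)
Your proposal is correct, and it diverges from the paper's own proof in an interesting way. For part (i) you follow essentially the same route as the paper: both arguments exploit the dual identity $\lambda^k=\pm\xi\left(p^k-z^k\right)$ coming from the $z$-update combined with the multiplier update, and then complete squares to exhibit $\mathcal{L}_{\eta}$ as $f$ plus nonnegative quadratics; the paper groups the terms as $\frac{\xi}{2}\left\|Ax^k+p^k-2z^k\right\|^2+\frac{\eta-\xi}{2}\left\|Ax^k-z^k\right\|^2$, while you group them as $\frac{\eta}{2}\left\|Ax^k-z^k+\lambda^k/\eta\right\|^2+\frac{\xi}{2}\left(1-\xi/\eta\right)\left\|p^k-z^k\right\|^2$, and the two decompositions are equivalent in strength. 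For part (ii), however, your route is genuinely different and in fact more solid than the paper's. The paper first bounds $\left\|p^k-z^k\right\|$ via the inequality $\left\|u+v\right\|^2\ge\left\|u\right\|^2-\left\|v\right\|^2$, which is false as stated (take $u=(2,0)$, $v=(-1,0)$; the correct variant $\left\|u+v\right\|^2\ge\frac{1}{2}\left\|u\right\|^2-\left\|v\right\|^2$ would repair this step), and then asserts that the objective $h$ of problem (\ref{proapp}) is coercive in order to conclude that $\left\{x^k\right\}$, $\left\{y^k\right\}$, $\left\{p^k\right\}$ are bounded. That coercivity claim is dubious: on the feasible sets one may take $a=0$ and let $\alpha\to\infty$ while $h$ stays constant, so $h$ alone does not control $x$. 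Your argument closes exactly this gap: you recover boundedness of $\alpha^k,\beta^k,\gamma^k$ from the constraint coupling — boundedness of $\left\|Ax^k-z^k\right\|$ and $\left\|p^k-z^k\right\|$, the unit-trace constraint on the $L,S,T$ components of $p^k\in\mathcal{Z}$, and the fact that $\left\|\alpha\right\|\le\mathrm{tr}\left(\alpha\right)$ for positive semidefinite matrices — which is precisely the structural information the paper's coercivity appeal implicitly needs but never makes explicit. The paper's approach buys brevity; yours buys a proof of (ii) that actually goes through. One cosmetic remark applying to both proofs: the dual identity holds only for $k\ge1$ (the initial $\lambda^0$ is arbitrary), so the lower bound and boundedness statements should be read for $k\ge1$, which is all the convergence argument requires.
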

\begin{proof}
	(i) It is obvious that $f\left( x^k,y^k\right) +\delta_{\mathcal{Z}}\left( p^k\right) +\delta_{\mathcal{Y}}\left( y^k\right) +\delta_{\mathcal{X}}\left( x^k\right) \ge-\infty$. Then, we have
	\begin{align*}
		\mathcal{L}_{\eta}\left(  y^{k},x^k,p^k,z^k,\lambda^k\right) =&f\left( x^k,y^k\right) +\delta_{\mathcal{Z}}\left( p^k\right) +\delta_{\mathcal{Y}}\left( y^k\right) +\delta_{\mathcal{X}}\left( x^k\right) +\frac{\xi}{2}\left\|p^k-z^k\right\|^2\\
		&+\left\langle\lambda^k,Ax^k-z^k \right\rangle+\frac{\eta}{2}\left\|Ax^k-z^k\right\|^2\\
		=&f\left( x^k,y^k\right) +\delta_{\mathcal{Z}}\left( p^k\right) +\delta_{\mathcal{Y}}\left( y^k\right) +\delta_{\mathcal{X}}\left( x^k\right) \\
		&+\frac{\xi}{2}\left\|p^k-z^k\right\|^2+\xi\left\langle p^k-z^k,Ax^k-z^k \right\rangle+\frac{\eta}{2}\left\|Ax^k-z^k\right\|^2\notag\\
		\ge& f\left( x^k,y^k\right) +\delta_{\mathcal{Z}}\left( p^k\right) +\delta_{\mathcal{Y}}\left( y^k\right) +\delta_{\mathcal{X}}\left( x^k\right) +\frac{\xi}{2}\left\|Ax^k+p^k-2z^k\right\|^2\\
		&+\frac{\eta-\xi}{2}\left\|Ax^k-z^k\right\|^2>-\infty\notag.
	\end{align*} 
	(ii) From the monotonicity of $\mathcal{L}_{\eta}$ and (i), $\mathcal{L}_{\eta}\left( y^{k},x^k,p^k,z^k,\lambda^k\right) $ is upper bounded by $\mathcal{L}_{\eta}\left( y^{0},x^0,p^0,z^0,\lambda^0\right) $ and so are $f\left( x^k,y^k\right) +\delta_{\mathcal{Z}}\left( p^k\right) +\delta_{\mathcal{Y}}\left( y^k\right) +\delta_{\mathcal{X}}\left( x^k\right) +\frac{\xi}{2}\left\|p^k-z^k+Ax^k-z^k\right\|^2$ and $\frac{\eta}{2}\left\|Ax^k-z^k\right\|^2$. It is known that $$\left\|p^k-z^k+Ax^k-z^k\right\|^2\ge\left\|p^k-z^k\right\|^2-\left\|Ax^k-z^k\right\|^2.$$ Due to the boundness of $\left\|Ax^k-z^k\right\|^2$, the objective function in (\ref{proapp}) is upper bounded. Because the objective function is coercive, $\left\lbrace x^k\right\rbrace$, $\left\lbrace y^k\right\rbrace$ and $\left\lbrace p^k\right\rbrace$ are bounded. Therefore, $\left\lbrace Ax^k\right\rbrace$ is bounded which indicates $\left\lbrace z^k\right\rbrace$ is bounded. By (\ref{lambda}), we have $\left\lbrace \lambda^k\right\rbrace$ is bounded.
\end{proof}
Next, we will show the boundness of the subgradient of $\mathcal{L}_{\eta}\left( y^{k+1},x^{k+1},p^{k+1},z^{k+1},\lambda^{k+1}\right) $.
\begin{proposition}[Boundness of the subgradient]\label{prop:subgrad_bound}
	Let $\left\lbrace \left( y^{k},x^{k},p^{k},z^{k},\lambda^{k}\right) \right\rbrace$ be the sequence generated by LPADMM. Under the constraints in Proposition\ref{prop:sufficient_descent}, there exists a constant $C^*$ and $D^{k+1}\in\partial\mathcal{L}_{\eta}\left( y^{k+1},x^{k+1},p^{k+1},z^{k+1},\lambda^{k+1}\right) $ such that
	\begin{align*}
		\left\|D^{k+1}\right\|\leqslant C^*\left( \left\|y^{k+1}-y^k\right\|^2+\left\|x^{k+1}-x^k\right\|^2+\left\|p^{k+1}-p^k\right\|^2+\left\|z^{k+1}-z^k\right\|^2\right) .
	\end{align*}
\end{proposition}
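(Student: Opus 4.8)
The plan is to exhibit one explicit element $D^{k+1}\in\partial\mathcal{L}_{\eta}\left(y^{k+1},x^{k+1},p^{k+1},z^{k+1},\lambda^{k+1}\right)$ and to bound its norm block by block. First I would decompose the subdifferential of $\mathcal{L}_{\eta}$ into its five coordinate blocks, evaluated at the new iterate (with $\nabla f$ taken at $(x^{k+1},y^{k+1})$):
\begin{align*}
	\partial_y\mathcal{L}_{\eta}&=\nabla_y f+\partial\delta_{\mathcal{Y}}(y^{k+1}),\quad \partial_x\mathcal{L}_{\eta}=\nabla_x f+\partial\delta_{\mathcal{X}}(x^{k+1})+A^\top\lambda^{k+1}+\eta A^\top(Ax^{k+1}-z^{k+1}),\\
	\partial_p\mathcal{L}_{\eta}&=\partial\delta_{\mathcal{Z}}(p^{k+1})+\xi(p^{k+1}-z^{k+1}),\quad \nabla_z\mathcal{L}_{\eta}=\xi(z^{k+1}-p^{k+1})-\lambda^{k+1}+\eta(z^{k+1}-Ax^{k+1}),\quad \nabla_\lambda\mathcal{L}_{\eta}=Ax^{k+1}-z^{k+1}.
\end{align*}
The only nonsmooth ingredients are the indicator subgradients $\partial\delta_{\mathcal{Y}},\partial\delta_{\mathcal{X}},\partial\delta_{\mathcal{Z}}$. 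The key idea is to select in each block exactly the subgradient element singled out by the first-order optimality condition of the matching LPADMM subproblem, so that the bulk of the terms telescope and only consecutive-iterate differences and gradient differences of $f$ survive.

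Concretely, the optimality conditions of (\ref{eq:case1})--(\ref{eq:case3}) furnish explicit elements $d_y^{k+1}\in\partial\delta_{\mathcal{Y}}(y^{k+1})$, $d_x^{k+1}\in\partial\delta_{\mathcal{X}}(x^{k+1})$, $d_p^{k+1}\in\partial\delta_{\mathcal{Z}}(p^{k+1})$ written in terms of $\nabla f(x^k,y^{k+1})$, the proximal coefficients $\mu_i$, the multiplier $\lambda^k$, and $Ax^{k+1}-z^k$. Substituting these into the blocks above, I expect the $y$-block to reduce to $\nabla_y f(x^{k+1},y^{k+1})-\nabla_y f(x^k,y^{k+1})-\mu_1(y^{k+1}-y^k)$, the $x$-block to a combination of $\nabla_x f(x^{k+1},y^{k+1})-\nabla_x f(x^k,y^{k+1})$, $A^\top(\lambda^{k+1}-\lambda^k)$, $\eta A^\top(z^k-z^{k+1})$ and $\mu_2(x^{k+1}-x^k)$, and the $p$-block to $\xi(z^k-z^{k+1})-\mu_3(p^{k+1}-p^k)$. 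For the two smooth blocks, combining the $z$-update (\ref{eq:case4}) with the multiplier update (\ref{eq:case5}) should force the $z$-block to collapse to $\lambda^k-\lambda^{k+1}$, while the $\lambda$-block is exactly $Ax^{k+1}-z^{k+1}=\tfrac{1}{\eta}(\lambda^{k+1}-\lambda^k)$. Thus every block is already expressed purely through successive differences and multiplier increments.

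It then remains to bound each piece uniformly in $k$. The multiplier increments $\lambda^{k+1}-\lambda^k$ are controlled by (\ref{lambda}) in terms of $\|p^{k+1}-p^k\|$ and $\|z^{k+1}-z^k\|$; the proximal and operator terms are manifestly linear in the successive differences; and the gradient differences $\nabla f(x^{k+1},y^{k+1})-\nabla f(x^k,y^{k+1})$ are handled by Lipschitz continuity of $\nabla f$ in its $x$-argument. Collecting the constants $\mu_1,\mu_2,\mu_3,\eta,\xi,\|A\|$ together with the Lipschitz modulus and invoking the triangle inequality then yields a single $C^*$ giving the stated bound on $\|D^{k+1}\|$ in terms of the successive differences. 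The main obstacle is precisely this Lipschitz step: since $f\left(x,y\right)=\tfrac12\|\rho-a\alpha-b\beta-c\gamma\|^2$ is a quartic polynomial, $\nabla f$ is only \emph{locally} Lipschitz and admits no global modulus. The remedy is Proposition \ref{prop:boundedness}, which confines the entire sequence $\{(y^k,x^k,p^k,z^k,\lambda^k)\}$ to a bounded set; on that set $\nabla f$ is Lipschitz with a uniform constant, which is what legitimizes a $k$-independent $C^*$ and closes the argument.
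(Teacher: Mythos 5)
Your proposal is correct and follows essentially the same route as the paper: it constructs the identical explicit $D^{k+1}$ from the first-order optimality conditions of the subproblems (\ref{eq:case1})--(\ref{eq:case5}), bounds the multiplier increments via (\ref{lambda}), and controls the gradient differences of $f$ through smoothness in $x$ together with the boundedness of the iterates from Proposition \ref{prop:boundedness}, which is exactly how the paper treats the $\nabla_y f$ block. One remark: both your argument and the paper's own proof produce a bound that is \emph{linear} in the successive differences, not the squared form written in the proposition statement, so that squaring appears to be a typo in the paper rather than a gap in your reasoning.
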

\begin{proof}
	From the iteration steps, we get
	\begin{align*}
		D^{k+1}=\begin{pmatrix}
			\nabla_{y}f\left( x^{k+1},y^{k+1}\right) -\nabla_{y}f\left( x^{k},y^{k+1}\right) -\mu_1\left( y^{k+1}-y^k\right) \\
			\nabla_{x}f\left( x^{k+1},y^{k+1}\right) -\nabla_{x}f\left( x^{k},y^{k+1}\right) +A^\top\left( \lambda^{k+1}-\lambda^k\right) +\eta A^\top\left( z^k-z^{k+1}\right) -\mu_2\left( x^{k+1}-x^k\right) \\
			\xi\left( z^k-z^{k+1}\right) -\mu_3\left( p^{k+1}-p^k\right) \\
			\lambda^k-\lambda^{k+1}\\
			\frac{1}{\eta}\left( \lambda^{k+1}-\lambda^{k}\right) 
		\end{pmatrix}.
	\end{align*}
	Then, we calculate the bound of subgradient above. From (\ref{lem:dual_bound}), we know that
	\begin{align*}
		\left\|\lambda^k-\lambda^{k+1}\right\|\leqslant\xi\left( \left\|p^k-p^{k+1}\right\|+\left\|z^k-z^{k+1}\right\|\right) .
	\end{align*}
	By the 1-smoothness of $f$ with respect to $x$ for every fixed $y$, we have
	\begin{align*}
		\left\|\nabla_{x}f\left( x^{k+1},y^{k+1}\right) -\nabla_{x}f\left( x^{k},y^{k+1}\right) \right\|\leqslant L\left( \left\|x^{k+1}-x^k\right\|\right) .
	\end{align*}
	Since $A$ has full column rank, the following inequality is obtained.
	\begin{align*}
		\left\|A^\top\left( \lambda^{k+1}-\lambda^k\right) \right\|\leqslant\sqrt{\left\|AA^\top\right\|}\left\|\lambda^{k+1}-\lambda^k\right\|.
	\end{align*}
	It is similar for $A^\top\left( z^k-z^{k+1}\right) $.
	Finally, we consider the term $\left\|\nabla_{y}f\left( x^{k+1},y^{k+1}\right) -\nabla_{y}f\left( x^{k},y^{k+1}\right) \right\|$. By the definition of $f\left( x,y\right) $, it can be rewritten as
	\begin{align*}
		\left\|\right.
		\left\langle \alpha^{k+1},a^{k+1}\alpha^{k+1}+b^{k+1}\beta^{k+1}+c^{k+1}\gamma^{k+1}-\rho\right\rangle -\left\langle \alpha^{k},a^{k+1}\alpha^{k}+b^{k+1}\beta^{k}+c^{k+1}\gamma^{k}-\rho\right\rangle\\
		\left\langle \beta^{k+1},a^{k+1}\alpha^{k+1}+b^{k+1}\beta^{k+1}+c^{k+1}\gamma^{k+1}-\rho\right\rangle -\left\langle \beta^{k},a^{k+1}\alpha^{k}+b^{k+1}\beta^{k}+c^{k+1}\gamma^{k}-\rho\right\rangle\\
		\left\langle \gamma^{k+1},a^{k+1}\alpha^{k+1}+b^{k+1}\beta^{k+1}+c^{k+1}\gamma^{k+1}-\rho\right\rangle -\left\langle \gamma^{k},a^{k+1}\alpha^{k}+b^{k+1}\beta^{k}+c^{k+1}\gamma^{k}-\rho\right\rangle
		\left.\right\|.
	\end{align*}
	For the first term, we have 
	\begin{align*}
		&\left\|\left\langle \alpha^{k+1},a^{k+1}\alpha^{k+1}+b^{k+1}\beta^{k+1}+c^{k+1}\gamma^{k+1}-\rho\right\rangle -\left\langle \alpha^{k},a^{k+1}\alpha^{k}+b^{k+1}\beta^{k}+c^{k+1}\gamma^{k}-\rho\right\rangle\right\|\notag\\
		\leqslant&|\left\langle \alpha^{k+1},\alpha^{k+1}\right\rangle-\left\langle \alpha^{k},\alpha^{k}\right\rangle|+|\left\langle \alpha^{k+1},\beta^{k+1}\right\rangle-\left\langle \alpha^{k},\beta^{k}\right\rangle|\\
		&+|\left\langle \alpha^{k+1},\gamma^{k+1}\right\rangle-\left\langle \alpha^{k},\gamma^{k}\right\rangle|+\left\langle \rho,\alpha^{k+1}-\alpha^{k}\right\rangle\\
		=&|\left\langle \alpha^{k+1},\alpha^{k+1}\right\rangle-\left\langle \alpha^{k},\alpha^{k+1}\right\rangle+\left\langle \alpha^{k},\alpha^{k+1}\right\rangle-\left\langle \alpha^{k},\alpha^{k}\right\rangle|\\
		&+|\left\langle \alpha^{k+1},\beta^{k+1}\right\rangle-\left\langle \alpha^{k},\beta^{k+1}\right\rangle+\left\langle \alpha^{k},\beta^{k+1}\right\rangle-\left\langle \alpha^{k},\beta^{k}\right\rangle|\\
		&+|\left\langle \alpha^{k+1},\gamma^{k+1}\right\rangle-\left\langle \alpha^{k},\gamma^{k+1}\right\rangle+\left\langle \alpha^{k},\gamma^{k+1}\right\rangle-\left\langle \alpha^{k},\gamma^{k}\right\rangle|+\left\langle \rho,\alpha^{k+1}-\alpha^{k}\right\rangle\\
		\leqslant&\left( \left\|\alpha^{k+1}\right\|+\left\|\alpha^{k}\right\|+\left\|\beta^{k+1}\right\|+\left\|\gamma^{k+1}\right\|+\left\|\rho\right\|\right) \left\|\alpha^{k+1}-\alpha^k\right\|+\left\|\alpha^{k}\right\|\left\|\beta^{k+1}-\beta^k\right\|\\
		&+\left\|\alpha^{k}\right\|\left\|\gamma^{k+1}-\gamma^k\right\|,
	\end{align*}
	where the first inequality follows from the constraint of $y$. Since the sequence $\left\lbrace x^k\right\rbrace$ is bounded, we have 
	\begin{align*}
		\left\|\nabla_{y}f\left( x^{k+1},y^{k+1}\right) -\nabla_{y}f\left( x^{k},y^{k+1}\right) \right\|\leqslant C\left\|x^{k+1}-x^k\right\|,
	\end{align*}
	where $C$ is a constant. Then, the boundness of $\left\|D^{k+1}\right\|$ is obtained easily.
\end{proof}
\begin{theorem}[Global convergence]\label{thmconvergence}
	Let$\left\lbrace \left( y^{k},x^{k},p^{k},z^{k},\lambda^{k}\right) \right\rbrace$ be the sequence generated by LPADMM. It has at least a limit point $\left( y^{*},x^{*},p^{*},z^{*},\lambda^{*}\right) $, and any limit point $\left( y^{*},x^{*},p^{*},z^{*},\lambda^{*}\right) $ is a stationary point. That is, $0\in\partial\mathcal{L}_{\eta}\left( y^{*},x^{*},p^{*},z^{*},\lambda^{*}\right) $. Moreover, since $\mathcal{L}_{\eta}$ is a K{\L} function, $\left\lbrace \left( y^{k},x^{k},p^{k},z^{k},\lambda^{k}\right) \right\rbrace $ converges globally to the unique point $\left( y^{*},x^{*},p^{*},z^{*},\lambda^{*}\right) $.
\end{theorem}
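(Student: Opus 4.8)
The plan is to follow the now-standard three-part template for K\L-based convergence of nonconvex splitting methods: (i) extract a convergent subsequence and show successive iterates become arbitrarily close; (ii) identify every cluster point as a stationary point; and (iii) upgrade subsequential convergence to convergence of the full sequence via the K\L inequality. Writing $u^k=(y^k,x^k,p^k,z^k,\lambda^k)$, I would first invoke Proposition \ref{prop:boundedness}(ii) to guarantee that $\{u^k\}$ is bounded, so Bolzano--Weierstrass yields at least one limit point $u^*$. To control the step sizes, I would sum the sufficient-descent estimate (\ref{final}) of Proposition \ref{prop:sufficient_descent} over $k=0,1,\dots,N$; the left-hand side telescopes, and Proposition \ref{prop:boundedness}(i) bounds the limit of $\mathcal{L}_{\eta}$ from below, so letting $N\to\infty$ gives $\sum_{k}\left(\|y^{k+1}-y^k\|^2+\|x^{k+1}-x^k\|^2+\|p^{k+1}-p^k\|^2+\|z^{k+1}-z^k\|^2\right)<\infty$. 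In particular every term of the series tends to zero, hence $\|u^{k+1}-u^k\|\to0$.

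Next I would prove that any cluster point is stationary. Along a subsequence $u^{k_j}\to u^*$, the vanishing of successive differences together with the relative-error bound of Proposition \ref{prop:subgrad_bound} forces $D^{k_j+1}\to0$, where $D^{k_j+1}\in\partial\mathcal{L}_{\eta}(u^{k_j+1})$. Here I would use the form of that bound that is linear in $\|u^{k+1}-u^k\|$, which is precisely what its proof actually delivers and what the argument needs. Because each iterate is produced by a proximal $\arg\min$ carrying the corresponding indicator, the iterates remain feasible for the closed sets $\mathcal{X},\mathcal{Y},\mathcal{Z}$, so the indicator terms vanish along the sequence and at $u^*$; combined with continuity of $f$ and of the quadratic penalty, this gives $\mathcal{L}_{\eta}(u^{k_j})\to\mathcal{L}_{\eta}(u^*)$. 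The closedness of the graph of the limiting subdifferential then lets me pass to the limit and conclude $0\in\partial\mathcal{L}_{\eta}(u^*)$, i.e. $u^*$ satisfies the optimality system of Theorem \ref{thm:opt_cond}.

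For the final, global statement I would invoke the abstract convergence theorem for K\L functions (in the Attouch--Bolte--Svaiter / Bolte--Sabach--Teboulle form). The three structural hypotheses are already in hand: the sufficient-decrease condition is (\ref{final}); the relative-error condition is the linear bound of Proposition \ref{prop:subgrad_bound}; and the continuity/closedness condition is the function-value convergence established in the previous paragraph. Since $\mathcal{L}_{\eta}$ is a K\L function --- indeed it is semialgebraic, being built from the polynomial $f$, squared norms, and indicators of sets cut out by PSD, trace, and simplex constraints --- the desingularizing-function argument yields the finite-length property $\sum_{k}\|u^{k+1}-u^k\|<\infty$. Hence $\{u^k\}$ is Cauchy and converges to a single point, which by the previous step is the stationary point $u^*$.

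The step I expect to be the main obstacle is the relative-error/continuity interface feeding the K\L machinery. Two points need care. First, Proposition \ref{prop:subgrad_bound} must be read as a bound \emph{linear} in $\|u^{k+1}-u^k\|$, not quadratic, since only the linear dependence closes the finite-length estimate; fortunately this linear form is exactly what its proof produces. Second, because the objective contains lower-semicontinuous indicator functions, a naive limit of $\mathcal{L}_{\eta}$ along the subsequence need not equal $\mathcal{L}_{\eta}(u^*)$, so the feasibility of every iterate must be used explicitly to annihilate the indicator terms in the limit and secure the value convergence. Once these are pinned down, the finite-length conclusion and the uniqueness of the limit follow from the standard abstract theorem.
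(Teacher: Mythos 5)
Your proposal is correct and takes essentially the same route as the paper: the paper's proof is a two-line appeal to the K{\L} framework of \cite{wang2019global}, with Proposition \ref{prop:boundedness} supplying boundedness of the iterates, Proposition \ref{prop:sufficient_descent} the sufficient descent, and Proposition \ref{prop:subgrad_bound} the relative-error bound --- exactly the three ingredients you assemble explicitly. Your side remark is a genuine catch worth recording: Proposition \ref{prop:subgrad_bound} as stated bounds $\left\|D^{k+1}\right\|$ by a \emph{quadratic} expression in the successive differences, which would not satisfy the relative-error hypothesis of the K{\L} theorem, so one must use the bound linear in $\left\|w^{k+1}-w^{k}\right\|$ that its proof actually establishes, as you do.
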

\begin{proof}
	By Proposition \ref{prop:boundedness}, limit points exist. Proposition \ref{prop:subgrad_bound} and the sufficient descent condition (\ref{final}) satisfy the K{\L} framework requirements \cite{wang2019global}, establishing global convergence.
\end{proof}

\section{Iteration Complexity}
We proceed to analyze the iteration complexity. We begin by formally defining the concept of an $\epsilon$-stationary point, followed by establishing the $O(1/\epsilon^2)$ iteration complexity to reach such points. To facilitate our analysis, we first introduce the following composite variable:
\begin{align*}
	w=\begin{pmatrix}
		y\\
		x\\
		p\\
		z
	\end{pmatrix}, \quad \mathcal{L}_{\eta}^{N+1}=\mathcal{L}_{\eta}\left( y^{N+1},x^{N+1},p^{N+1},z^{N+1},\lambda^{N+1}\right) .
\end{align*}
\begin{definition}[$\epsilon$-stationary solution]\label{def:epsilon-stationary} A solution $w^*= \left( y^*, x^*, p^*, z^*\right) $ is said to be an $\epsilon$-stationary solution of problem (\ref{proapp}) if there exists a Lagrange multiplier $\lambda^*$ satisfying:
	\begin{align*}
	\begin{cases}
	\mathrm{dist }\left( -\nabla_{y}f(x^*,y^*),\partial\delta_{\mathcal{Y}}(y^*)\right) \leqslant\epsilon\\
	\mathrm{dist }\left( -\nabla_{x}f(x^*,y^*)-A^\top\lambda^*,\partial\delta_{\mathcal{X}}(x^*)\right) \leqslant\epsilon\\
	\mathrm{dist }\left( -\xi(p^*-z^*),\partial\delta(p^*)\right) \leqslant\epsilon\\
	\left\|\xi(z^*-p^*)-\lambda^*\right\|\leqslant\epsilon\\
	\left\|Ax^*-z^*\right\|\leqslant\epsilon
	\end{cases}
	\end{align*}
\end{definition}
We first establish a key bound for the augmented Lagrangian function.
\begin{lemma}[Lower boundedness of $\mathcal{L}_{\eta}$]\label{Lbound}
	Let the sequence $\left\lbrace \left( y^{k},x^{k},p^{k},z^{k},\lambda^{k}\right) \right\rbrace $ be the sequence generated by LPADMM with parameters satisfying: $\mu_1>0, \mu_2>L, \mu_3>\frac{2\xi^2}{\eta}$ and $\eta>2\xi$. Then, we have
	\begin{align*}
	\mathcal{L}_{\eta}^{N+1}\ge h^*,\quad\forall N\ge0.
	\end{align*}
\end{lemma}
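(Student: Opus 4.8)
The plan is to derive the bound directly from the lower-boundedness rearrangement already exposed in Proposition~\ref{prop:boundedness}(i), rather than through any duality inequality (which would point the estimate the wrong way). The guiding idea is that $\mathcal{L}_{\eta}^{N+1}$ dominates the pure objective $f$ evaluated at the current iterate, and that this value of $f$ is itself an attainable value of the composite objective $h$, hence no smaller than its infimum $h^{*}:=\inf h$.

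First I would invoke the optimality condition of the $z$-subproblem (\ref{eq:case4}) together with the multiplier update (\ref{eq:case5}), which give the identity $\lambda^{N+1}=\xi\left(p^{N+1}-z^{N+1}\right)$ established in the proof of Lemma~\ref{lem:dual_bound}. Substituting this relation into the definition (\ref{aulag}) of $\mathcal{L}_{\eta}$ and completing the square in the coupled quantities $Ax^{N+1}-z^{N+1}$ and $p^{N+1}-z^{N+1}$ reproduces exactly the regrouping of Proposition~\ref{prop:boundedness}(i):
\begin{align*}
\mathcal{L}_{\eta}^{N+1}=\;&f\left(x^{N+1},y^{N+1}\right)+\delta_{\mathcal{X}}\left(x^{N+1}\right)+\delta_{\mathcal{Y}}\left(y^{N+1}\right)+\delta_{\mathcal{Z}}\left(p^{N+1}\right)\\
&+\frac{\xi}{2}\left\|Ax^{N+1}+p^{N+1}-2z^{N+1}\right\|^2+\frac{\eta-\xi}{2}\left\|Ax^{N+1}-z^{N+1}\right\|^2.
\end{align*}

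Next, since the parameter regime $\eta>2\xi$ forces $\eta-\xi>0$, both displayed quadratic terms are nonnegative, while the three indicator terms vanish because each update enforces membership in $\mathcal{X}$, $\mathcal{Y}$, $\mathcal{Z}$ (via the spectral projection of Lemma~\ref{proj} and the simplex/trace steps). Discarding the nonnegative quadratics therefore yields $\mathcal{L}_{\eta}^{N+1}\ge f\left(x^{N+1},y^{N+1}\right)$. The final step is to recognise the right-hand side as an admissible value of $h$: evaluating $h$ at the tuple obtained from the iterate by replacing the auxiliary block $z$ with $p^{N+1}$ nullifies the penalty $\tfrac{\xi}{2}\|p-z\|^2$ and keeps every indicator at zero, so that $h\left(x^{N+1},y^{N+1},p^{N+1},p^{N+1}\right)=f\left(x^{N+1},y^{N+1}\right)\ge h^{*}$. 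Chaining the two inequalities gives $\mathcal{L}_{\eta}^{N+1}\ge h^{*}$ for every $N\ge0$.

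The main obstacle is conceptual rather than computational: one must fix $h^{*}$ as the (unconstrained) infimum of $h$ and exhibit an explicit point at which $h$ equals $f$ evaluated at the iterate. A naive attempt to bound $\mathcal{L}_{\eta}^{N+1}$ by the \emph{constrained} optimal value fails, because the iterate need not satisfy $Ax^{N+1}=z^{N+1}$ and the cross term $\langle Ax^{N+1}-z^{N+1},\,p^{N+1}-z^{N+1}\rangle$ arising after the substitution of $\lambda^{N+1}$ can be negative; it is precisely the completing-the-square identity above, valid only when $\eta\ge\xi$, that absorbs this indefinite cross term into perfect squares and thereby secures the clean lower bound.
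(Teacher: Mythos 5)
Your proposal is correct and follows essentially the same route as the paper's proof: both substitute the closed-form multiplier identity $\lambda^{N+1}=\pm\xi\left(p^{N+1}-z^{N+1}\right)$ into $\mathcal{L}_{\eta}$, complete the square so that $\mathcal{L}_{\eta}^{N+1}$ equals $f\left(x^{N+1},y^{N+1}\right)$ plus nonnegative quadratics (using $\eta>2\xi$ and feasibility of the iterates to kill the indicators), and then compare with the unconstrained infimum $h^*$ by exhibiting an explicit point where $h$ attains the retained value. The only cosmetic difference is the final comparison point: the paper keeps $\tfrac{\xi}{2}\left\|Ax^{N+1}-p^{N+1}\right\|^2$ and evaluates $h$ at $z=Ax^{N+1}$, whereas you discard both squares and evaluate $h$ at $z=p=p^{N+1}$; both are valid.
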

\begin{proof}
From (\ref{aulag}) and the iteration scheme, it can be obtained that
\begin{align*}
&\mathcal{L}_{\eta}^{N+1}=h\left( x^{N+1},y^{N+1},z^{N+1},p^{N+1}\right) +\left\langle \lambda^{N+1},Ax^{N+1}-z^{N+1}\right\rangle+\frac{\eta}{2}\left\|Ax^{N+1}-z^{N+1}\right\|^2\\
=&f\left( x^{N+1},y^{N+1}\right) +\frac{\xi}{2}\left\|p^{N+1}-z^{N+1}\right\|^2+\frac{\xi}{2}\left\|Ax^{N+1}-z^{N+1}\right\|^2-\xi\left\langle p^{N+1}-z^{N+1},Ax^{N+1}-z^{N+1}\right\rangle\\
&+\frac{\eta-\xi}{2}\left\|Ax^{N+1}-z^{N+1}\right\|^2\\
=&f\left( x^{N+1},y^{N+1}\right) +\frac{\xi}{2}\left\|Ax^{N+1}-p^{N+1}\right\|^2+\frac{\eta-\xi}{2}\left\|Ax^{N+1}-z^{N+1}\right\|^2\\
\ge&h^*,
\end{align*}
where $h^*$ denotes the optimal value of $h\left( x,y,z,p\right) $ without linear constraint.
\end{proof}

\begin{theorem}[Iteration complexity]
	Suppose that the sequence $\left\lbrace \left( y^{k},x^{k},p^{k},z^{k},\lambda^{k}\right) \right\rbrace $ is generated by LPADMM. $\mu_1>0, \mu_2>L, \mu_3>\frac{2\xi^2}{\eta}$ and $\eta>2\xi$. Then, we can find an $\epsilon$-stationary solution of problem (\ref{proapp}), where $\left( y^{k},x^{k},p^{k},z^{k},\lambda^{k}\right) $ be the first iteration that satisfies
	\begin{align*}
		\Delta_{\hat{k}}:=\left\|w^{\hat{k}}-w^{\hat{k}-1}\right\|^2\leqslant\epsilon^2/\max\left\lbrace \omega_1,\omega_2,\omega_3,\omega_4,\omega_5\right\rbrace ,
	\end{align*}
where $\omega_1:=\frac{2\xi^2}{\eta^2}$, $\omega_2:=2\xi^2$, $\omega_3=\max\left\lbrace 2\xi^2,2\mu_3^2\right\rbrace$, $\omega_4:=\max\left\lbrace 4\left( L^2+\mu_2^2\right) ,16\xi^2+8\eta^2\right\rbrace$, $\omega_5:=\max\left\lbrace2C^2,2\mu_1^2 \right\rbrace$.\\
	Moreover, $\hat{k}$ is no more than 
	\begin{align*}
		T:=\left \lceil\frac{\max\left\lbrace \omega_1,\omega_2,\omega_3,\omega_4,\omega_5\right\rbrace} {\nu\epsilon^2}\left( \mathcal{L}_{\eta}^1-h^*\right)   \right \rceil,
	\end{align*}
where $\nu:=\min\left\lbrace \left( \eta+\mu_2-L\right) ,\left( \frac{\eta}{2}-\frac{2\xi^2}{\eta}\right) ,\left( \mu_3-\frac{2\xi^2}{\eta}\right) ,\mu_1\right\rbrace $.
\end{theorem}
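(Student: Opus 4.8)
The plan is to split the argument into two independent pieces that mirror the two assertions of the theorem: a \emph{translation} step showing that a single small displacement $\Delta_{\hat k}=\|w^{\hat k}-w^{\hat k-1}\|^2$ forces the iterate $(y^{\hat k},x^{\hat k},p^{\hat k},z^{\hat k},\lambda^{\hat k})$ to meet all five inequalities of Definition \ref{def:epsilon-stationary}, and a \emph{budget} step showing that the sufficient-descent inequality guarantees such a small displacement within $T$ iterations. The two steps are essentially decoupled: the translation step calibrates the constants $\omega_1,\dots,\omega_5$, while the budget step calibrates $\nu$ and invokes the lower bound of Lemma \ref{Lbound}. I expect the only genuinely laborious part to be the constant-tracking in the $x$-condition of the translation step.

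For the translation step I would process the five conditions of Definition \ref{def:epsilon-stationary} one at a time, each read off from the first-order optimality condition of the matching subproblem. The $y$-condition comes from the optimality of (\ref{eq:case1}), which places $-\nabla_{y}f(x^k,y^{k+1})-\mu_1(y^{k+1}-y^k)$ in $\partial\delta_{\mathcal{Y}}(y^{k+1})$, so the residual $\mathrm{dist}(-\nabla_{y}f(x^{k+1},y^{k+1}),\partial\delta_{\mathcal{Y}}(y^{k+1}))$ is at most $\|\nabla_{y}f(x^{k+1},y^{k+1})-\nabla_{y}f(x^k,y^{k+1})\|+\mu_1\|y^{k+1}-y^k\|\le C\|x^{k+1}-x^k\|+\mu_1\|y^{k+1}-y^k\|$ by the marginal Lipschitz estimate already derived in Proposition \ref{prop:subgrad_bound}; squaring with $(a+b)^2\le 2a^2+2b^2$ yields the bound $\omega_5\Delta_{\hat k}$. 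The $p$-condition is handled identically from (\ref{eq:case3}), giving the residual $\|\xi(z^{k+1}-z^k)+\mu_3(p^{k+1}-p^k)\|$ and the constant $\omega_3$; the feasibility condition uses the $\lambda$-update (\ref{eq:case5}) together with the dual estimate (\ref{lambda}) of Lemma \ref{lem:dual_bound}, giving $\|Ax^{k+1}-z^{k+1}\|^2=\eta^{-2}\|\lambda^{k+1}-\lambda^k\|^2\le\omega_1\Delta_{\hat k}$; and the multiplier condition follows from the optimality of (\ref{eq:case4}) and (\ref{eq:case5}), whose residual Lemma \ref{lem:dual_bound} controls by $\omega_2\Delta_{\hat k}$. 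The laborious case is the $x$-condition: substituting the optimality of (\ref{eq:case2}) and eliminating $A^\top\lambda^{k+1}$ via (\ref{eq:case5}) reduces the residual to $\|\nabla_{x}f(x^{k+1},y^{k+1})-\nabla_{x}f(x^k,y^{k+1})+A^\top(\lambda^{k+1}-\lambda^k)+\eta A^\top(z^k-z^{k+1})-\mu_2(x^{k+1}-x^k)\|$, which I would bound term by term using the $1$-smoothness of Lemma \ref{convexsmooth}, the identity $\|A\|^2=2$ coming from $A^\top A=2I$, and again (\ref{lambda}); grouping the $x$-, $z$- and $p$-contributions and applying the power-mean inequality $(\sum_i a_i)^2\le n\sum_i a_i^2$ produces exactly $\omega_4$. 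Since every residual is then at most $\sqrt{\omega_i}\,\|w^{\hat k}-w^{\hat k-1}\|$, the hypothesis $\Delta_{\hat k}\le\epsilon^2/\max_i\omega_i$ makes $w^{\hat k}$ an $\epsilon$-stationary solution.

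For the budget step I would first rewrite the sufficient-descent inequality (\ref{final}) using $\|x^{k+1}-x^k\|^2_{A^\top A}=2\|x^{k+1}-x^k\|^2$, which merges the two $x$-terms into the single coefficient $\eta+\mu_2-L$; under the stated parameter hypotheses all four resulting coefficients are positive, so setting $\nu$ to their minimum gives the compact estimate $\mathcal{L}_{\eta}^{k+1}-\mathcal{L}_{\eta}^{k}\le-\nu\|w^{k+1}-w^k\|^2$. Summing from $k=1$ to $N$ telescopes the left-hand side, and the uniform lower bound $\mathcal{L}_{\eta}^{N+1}\ge h^*$ of Lemma \ref{Lbound} yields $\nu\sum_{k=1}^{N}\|w^{k+1}-w^k\|^2\le\mathcal{L}_{\eta}^1-h^*$. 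Bounding the sum below by $N\min_{1\le k\le N}\Delta_{k+1}$ shows that the smallest one-step movement among the first $N$ iterates is at most $(\mathcal{L}_{\eta}^1-h^*)/(\nu N)$; requiring this to be $\le\epsilon^2/\max_i\omega_i$ and solving for $N$ gives the threshold $T$, so the first index $\hat k$ meeting the criterion satisfies $\hat k\le T$. The main obstacle remains the $x$-residual, the only one that simultaneously couples the gradient difference, the dual increment (which must itself be re-expanded through Lemma \ref{lem:dual_bound} into $p$- and $z$-differences), and the $A^\top$-weighted $z$-increment, so that obtaining precisely $\omega_4=\max\{4(L^2+\mu_2^2),16\xi^2+8\eta^2\}$ rather than a looser constant demands careful collection of like terms before the power-mean inequality is applied; all other conditions are one-line consequences of the respective subproblem optimality conditions and the established dual bound.
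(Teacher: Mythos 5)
Your proposal is correct and takes essentially the same route as the paper: the same translation step (reading each stationarity residual off the subproblem optimality conditions, converting the dual increment into $p$- and $z$-differences via Lemma \ref{lem:dual_bound}, and calibrating $\omega_1,\dots,\omega_5$) combined with the same budget step (telescoping the sufficient-descent inequality of Proposition \ref{prop:sufficient_descent} against the lower bound $h^*$ of Lemma \ref{Lbound}, with $A^\top A=2I$ merging the two $x$-terms into $\eta+\mu_2-L$, and bounding the minimum displacement by the average). The only cosmetic difference is that you re-derive the subgradient components directly from the updates, whereas the paper packages them as Proposition \ref{prop:subgrad_bound} and cites it.
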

\begin{proof}
By summing (\ref{final}) over $k=1,2,\cdots,N$, we get
\begin{align*}
	\mathcal{L}_{\eta}^1-\mathcal{L}_{\eta}^{N+1}\ge\nu\sum_{k=1}^{N}\left\|w^{k+1}-w^k\right\|^2.
\end{align*}
Then, 
\begin{align*}
	\min_{2\leqslant k\leqslant N+1}\Delta_{k}&\leqslant\frac{1}{T}\sum_{k=2}^{N+1}\Delta_{k}\\
	&\leqslant\frac{1}{\nu T}\left( \mathcal{L}_{\eta}^1-h^*\right) .
\end{align*}
Considering the $\epsilon$-stationary point, the bound of $\partial\mathcal{L}_{\eta}^{N+1}$ needs to be calculated. Using the results in Proposition \ref{prop:subgrad_bound}, we have
\begin{align}\label{new1}
\left\|Ax^{N+1}-z^{N+1}\right\|^2\leqslant\frac{2\xi^2}{\eta^2}\Delta_{N+1}.
\end{align}
\begin{align}
	\mathrm{dist}\left( -\xi\left( p^{N+1}-z^{N+1}\right) ,\partial\delta_{\mathcal{Z}}\left( p^{N+1}\right) \right) &\leqslant2\xi^2\left( \left\|z^{N+1}-z^N\right\|^2\right) +2\mu_3^2\left( \left\|p^{N+1}-p^N\right\|^2\right) \notag\\
	&\leqslant\sqrt{\omega_3 \Delta_{N+1}}
\end{align}
\begin{align}
&\mathrm{dist }\left( -\nabla_{x}f\left( x^{N+1},y^{N+1}\right) -A^\top\lambda^{N+1},\partial\delta_{\mathcal{X}}\left( x^{N+1}\right) \right) \notag\\
\leqslant&4\left[ \left( L^2+\mu_2^2\right) \left\|x^{N+1}-x^N\right\|^2+2\left\|\lambda^{N+1}-\lambda^N\right\|^2+2\eta^2\left\|z^{N+1}-z^N\right\|^2\right]\notag \\
\leqslant&\sqrt{\omega_4 \Delta_{N+1}}
\end{align}
\begin{align}\label{new4}
	\mathrm{dist }\left( -\nabla_{y}f\left( x^{N+1},y^{N+1}\right) ,\partial\delta_{\mathcal{Y}}\left( y^{N+1}\right) \right) &\leqslant2C^2\left\|x^{N+1}-x^N\right\|^2+2\mu_1^2\left\|y^{N+1}-y^N\right\|^2\notag\\
	&\leqslant\sqrt{\omega_5 \Delta_{N+1}}
\end{align}
Combining (\ref{new1})-(\ref{new4}), we can get the upper bound of iteration complexity.
\end{proof}

	\bibliographystyle{unsrt}
	\bibliography{quantum}
\end{document}